\documentclass[onecolumn,notitlepage,nofootinbib,floatfix,superscriptaddress]{revtex4-2}
\usepackage{amsmath, amsfonts, amsthm, amssymb, bbm}
\usepackage[margin=1in]{geometry}
\usepackage{xspace}
\usepackage{algorithm,algpseudocode}
\usepackage[endLComment=~]{algpseudocodex}
\usepackage{graphicx,xcolor,tikz}
\usepackage{hyperref}
\usepackage{physics}
\usepackage{qcircuit}

\newcommand{\bb}{\mathbb}
\newcommand{\mc}{\mathcal}
\newcommand{\res}[2]{\left. #1 \right|_{#2}}

\DeclareMathOperator{\supp}{supp}

\DeclareMathOperator{\poly}{poly}
\DeclareMathOperator*{\E}{\mathbb{E}}

\algblockdefx[ForEach]{ForEach}{EndForEach}[1]{\textbf{for each} #1 \textbf{do}}{\textbf{end for each}}

\newtheorem{theorem}{Theorem}
\newtheorem{lemma}[theorem]{Lemma}

\newtheorem{definition}[theorem]{Definition}
\newtheorem{property}[theorem]{Property}

\begin{document}

\title{A polynomial-time approximation scheme\\ for minimum-weight decoding of topological codes}

\author{Shouzhen Gu}
\thanks{These authors contributed equally to this work.}
\affiliation{Yale Quantum Institute \& Department of Applied Physics, Yale University, New Haven, CT, USA}
\author{Lily Wang}
\thanks{These authors contributed equally to this work.}
\affiliation{CSE Division, University of Michigan, Ann Arbor, MI, USA}
\author{Aleksander Kubica}
\affiliation{Yale Quantum Institute \& Department of Applied Physics, Yale University, New Haven, CT, USA}

\begin{abstract}
Two-dimensional topological translationally invariant (2D TTI) stabilizer codes lie at the heart of fault-tolerant quantum computation, but using them requires solving the decoding problem.
Minimum-weight decoding of these codes was recently shown to be NP-hard, even in basic settings, such as the color code with Pauli $Z$ errors and the toric code with Pauli $X$, $Y$ and $Z$ errors.
Here, we prove that minimum-weight decoding of 2D TTI codes nonetheless admits a polynomial-time approximation scheme (PTAS), i.e., for any constant $\varepsilon>0$, a recovery operator of weight within a multiplicative factor of $1+\varepsilon$ of the minimum can be found in polynomial time.
Our approach builds on Arora's PTAS for Euclidean problems, such as the traveling salesman problem, and applies when decoding can be cast in terms of point-like excitations connected by string-like errors.
It therefore extends beyond two dimensions, covering certain higher-dimensional topological codes and quantum memories, including the toric code with phenomenological or circuit-level noise.

\end{abstract}

\maketitle

\section{Introduction}
\label{sec:intro}

Quantum error-correcting (QEC) codes~\cite{Shor1995,Steane1996,Gottesman1996} are an indispensable component of fault-tolerant quantum computation~\cite{Shor1996,Preskill1998}.
Topological stabilizer codes~\cite{Kitaev2003, bombin2013introductiontopologicalquantumcodes, Kubica_thesis} constitute an important class of QEC codes, as they typically exhibit high QEC thresholds and admit fault-tolerant implementation of logical gates with low resource overhead.
The canonical example of a two-dimensional topological translationally invariant (2D~TTI) code is the toric code~\cite{Kitaev2003}, together with its planar variant, the surface code~\cite{bravyi1998}, which can be realized by placing qubits on a square lattice and measuring geometrically local parity checks.
Recently, much effort has been devoted to improving the encoding rate of the toric code, leading to the discovery of other 2D TTI codes, such as bivariate bicycle codes\footnote{To view bivariate bicycle codes as 2D TTI codes, we fix a set of polynomials in their polynomial representation~\cite{Haah2013} and consider a code family on lattices of growing size.}~\cite{Kovalev2013,bravyi2024high} and tile codes~\cite{steffan2025tile}.

For QEC codes to be practically relevant for fault-tolerant quantum computation, they must also admit computationally efficient solutions to the decoding problem~\cite{Terhal2015}.
The decoding problem asks the following question: given an error syndrome indicating the presence of errors, find a recovery operator that mitigates the detrimental effects of these errors on the encoded information.
For topological codes, many strategies have been devised based on minimum-weight decoding, in which one seeks a minimum-weight recovery operator consistent with the given syndrome.
Such strategies are well suited to scenarios with independent and identically distributed noise, where lower-weight errors are more likely, and they exhibit competitive (albeit not necessarily optimal) performance.
The prototypical example is the minimum-weight perfect matching decoder~\cite{Dennis2002} for the toric code, which, for either Pauli $X$ or $Z$ errors, finds a minimum-weight recovery operator;
other examples include matching decoders for the color code~\cite{Delfosse2014,Kubica2023CCrestrictiondecoder} and more general 2D TTI codes~\cite{tan2026generalizedmatchingdecoders2d,sahay2026matchingdecoderbivariatebicycle}. 

Despite the existence of many computationally efficient matching decoders, recent works~\cite{walters2026CCNPhard,gu2026colorcodesurfacecode} demonstrated that, rather surprisingly, minimum-weight decoding is NP-hard in basic QEC settings, including the toric code with Pauli $X$, $Y$ and $Z$ errors and the color code with Pauli $Z$ errors.
These results are interesting from a theoretical standpoint, as they establish hardness of the minimum-weight decoding problem for 2D TTI codes; from the practical standpoint, however, one may care more about the complexity of solving this problem approximately.
In that case, instead of finding a minimum-weight recovery, it suffices to find a recovery operator whose weight 
exceeds the minimum by at most a multiplicative factor of $1+\varepsilon$ for any constant $\varepsilon>0$.

In this work, we prove that minimum-weight decoding of 2D TTI codes admits a polynomial-time approximation scheme (PTAS), thereby circumventing the NP-hardness of the exact problem in the aforementioned basic QEC settings.
Specifically, to achieve approximation ratio $1+\varepsilon$, our algorithm has time complexity $L^2(\log L)^{O(1/\varepsilon)}$ for a code on an $L\times L$ lattice.
To our knowledge, our results constitute the first PTAS for minimum-weight decoding of 2D TTI codes; the previous best provable approximation ratio for the toric code was two, based on independent matching of $X$ and $Z$ excitations, with certain heuristic decoders trying to leverage the correlations between Pauli $X$ and $Z$ errors for the depolarizing noise~\cite{Delfosse2014XZ,Higgott2023,iOlius2023}.

Our approach is based on Arora's PTAS for Euclidean problems such as the traveling salesman problem~\cite{AroraTSP}.
The core idea is to recursively dissect the lattice into square regions and show that the optimal solution can be perturbed, with only a slight increase in cost, so that within each region the solution is locally optimal and crosses region boundaries only a constant number of times, and only through regularly spaced portals; such a solution can then be found efficiently by dynamic programming over the recursive dissection tree.
Fig.~\ref{fig:main} illustrates the main concepts used in the proof.
We explain the details of the approach first for the simpler case of the toric code before proving the result in the general setting of 2D TTI codes.
The key structure of the decoding problem that is leveraged is the presence of point-like excitations connected by string-like errors, which allows for generalizations to certain higher-dimensional topological codes and quantum memories, including 2D TTI codes with phenomenological or circuit-level noise.
The proof of correctness in these higher-dimensional decoding problems requires a technical condition on the structure of excitations; for concreteness, we verify this condition for the quantum memory setting with the toric code and phenomenological noise.
We expect our results to be not only of theoretical value, but also a starting point for practical decoders that improve on existing approaches seeking low-weight recovery operators, such as the minimum-weight perfect matching decoder.

\section{Preliminaries}
\label{sec:prelim}
An $n$-qubit stabilizer code $\mc C\subseteq (\bb C^2)^{\otimes n}$ is the simultaneous $+1$-eigenspace of an abelian subgroup $\mc S$ of the $n$-qubit Pauli group $\mc P^n$, where $-I\notin \mc S$.
By ignoring phases, we typically work with the symplectic representation of Pauli operators by identifying
them with binary vectors of length $2n$,
where the first $n$ components give the $X$ support and the last $n$ components give the $Z$ support. In this representation, multiplication of operators corresponds to addition (modulo two) of vectors.

Suppose $\mc S$ is generated by elements $s_1, \dots, s_p$. Then, if a Pauli error $e$ is applied to the code, the stabilizers values $m_i=(-1)^{\sigma_i}$ of the generators $s_i$ that anticommute with $e$ will be flipped from $+1$ to $-1$. The binary vector $\sigma\in \bb F_2^p$ is called the \emph{syndrome} of $e$.
For us, errors are called \emph{equivalent} if they produce the same syndrome, regardless of whether they differ by a logical operator. We refer to individual stabilizer generators that have been flipped, i.e., $\sigma_i=1$, as locations of \emph{excitations}. The task of decoding is to find an error $e'$ resulting in a given syndrome $\sigma$.
We always assume that the syndrome is valid, i.e., that there is some solution to the decoding problem; checking if this is the case can be done using Gaussian elimination in $O(n^3)$ time.
In the minimum-weight decoding problem, our goal is to find such an error $e_0$ of minimum weight, where the weight of a Pauli operator is defined to be the size of its support.
Because finding the minimum-weight error is often hard~\cite{Hsieh2011,Kuo2020,walters2026CCNPhard,gu2026colorcodesurfacecode}, we are interested in $(1+\varepsilon)$-approximations, meaning an error $e'$ satisfying the given syndrome and of weight $|e'|\le (1+\varepsilon)|e_0|$. If for some code family $\{\mc C_n\}_n$ and any fixed $\varepsilon>0$, an algorithm with time complexity $\poly(n)$ can find a $(1+\varepsilon)$-approximation for any syndrome, we say that it is a polynomial-time approximation scheme (PTAS) for the minimum-weight decoding problem. The dependence on $\varepsilon$ is arbitrary; for example, $O(n^{\exp(1/\varepsilon)})$ is allowed.
We often leave the subscript $n$ implicit when referring to a code family.

\begin{figure}[htp!]
    \centering
\includegraphics[width=0.75\linewidth,trim={2cm 3cm 3cm 3cm},clip]{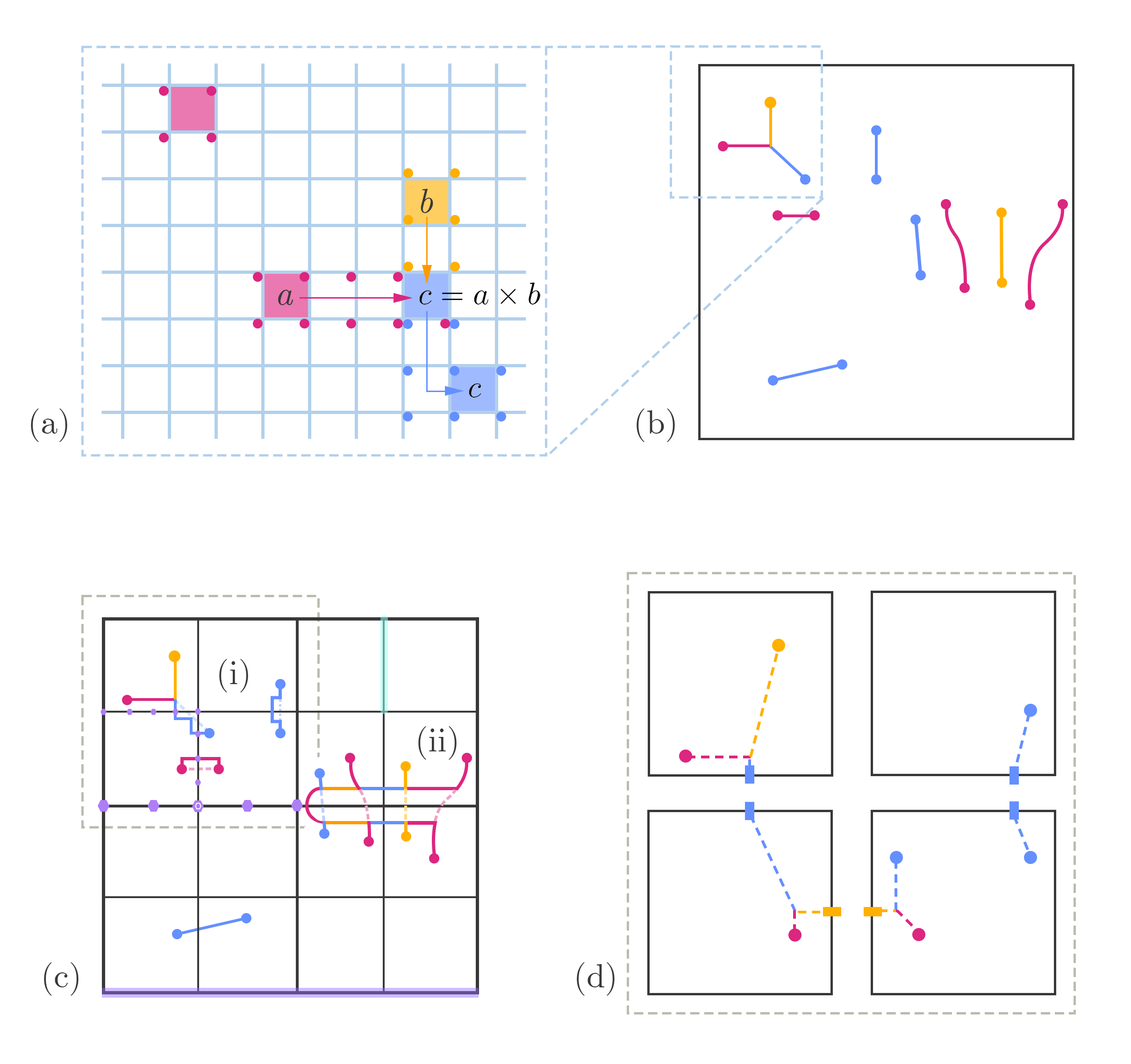}
    \caption{
    (a) A 2D TTI code on a square lattice has a constant number of qubits at each vertex; its stabilizers are associated with faces and supported only on adjacent qubits.
    Violated stabilizers correspond to abelian anyons, which are point-like excitations that are deterministically moved by string-like operators.
    Two anyons $a$ and $b$ can be fused together into $c = a\times b$.
    (b) In the decoding problem, for a given syndrome $\sigma$, i.e., a set of excitations (colored dots), we seek a corresponding error $e$ (colored lines) that has minimum or close to minimum weight.
    (c) To solve it, we recursively partition the lattice into squares; we depict level-$1$ (thick) and level-$2$ (thin) squares.
    The level of a line or line segment is the level of the largest square coinciding with it.
    For example, the turquoise line segment is level 2 and the purple line is level 1.
    Portals (purple hexagons) are regularly spaced along the sides of each square; the number of portals for each square is the same, independent of its level.
    A minimum-weight error $e_0$ error may be deformed using the (i) Rerouting and (ii) Patching Properties to obtain an equivalent error $e'$ of weight $|e'|\le (1+\varepsilon)|e_0|$ which crosses each line segment only at portals and uses at most $r$ of them.
    (d) The minimum-weight $r$-light error $e$ with syndrome $\sigma$ is found by dynamic programming;
    we depict squares within a region bounded by the gray dashed lines in (c).
    Solutions of decoding subproblems from smaller squares are combined to give a minimum-weight error on a larger square that uses at most $r$ portals;
    the global solution is guaranteed to have weight $|e|\le |e'|$.
    }
    \label{fig:main}
\end{figure}

In this work, we are primarily concerned with decoding 2D TTI codes.
A 2D TTI code $\mc C$ is defined on an $L\times L$ square lattice with periodic boundary conditions, although our results also hold with open boundary conditions, as we discuss in Sec.~\ref{subsec:removeassumptions}.
We place $q$ qubits on each site of the lattice, where $q$ is a constant. The stabilizers generators are the translations of a constant number of geometrically local operators, meaning that they only act on qubits within a constant radius; we then consider a code family on lattices of growing size.
Without loss of generality, we can associate each stabilizer generator with a plaquette and assume that it acts only on qubits on the vertices of that plaquette; see Fig.~\ref{fig:main}(a).
This is because we can coarse-grain the original lattice, grouping together qubits in a constant-sized unit cell as a single vertex, which results in a larger but constant number of qubits on each site.
We also require that $\mc C$ has topological order, meaning that its code distance grows with $L$.
A 2D TTI code is equivalent via a constant-depth local Clifford circuit to some number $\alpha$ of copies of the toric code (and qubits in a product state)~\cite{Yoshida2011,Haah2013,bombin2014structure}.
Therefore, we can associate a charge in $\bb Z_2^{2\alpha}$ to any excitation in $\mc C$, which is sometimes called an abelian anyon. An excitation may be moved around via a string-like operator to a different plaquette on the lattice while maintaining its charge. Two excitations $a$ and $b$ that are moved to the same location may combine into a single excitation $c$, which is conventionally denoted by a fusion rule $c=a\times b$.
The charge of $c$ is found by adding modulo two the charges of $a$ and $b$.
By coarse-graining sufficiently, such an operator can be made to have support only on qubits on the vertices adjacent to the path taken (on the dual lattice). In particular, its weight is at most $(4+2s)q$ for a path of $\ell^1$-length $s$.
We say that the total charge of a set of anyons is neutral if the charges sum to zero. This occurs only if the excitations form the syndrome of some error, which annihilates the excitations and can be taken to be supported on paths that move all the excitations to a given location. Conversely, the total charge of the excitations caused by any error is neutral.
Fig.~\ref{fig:main}(a) illustrates the properties of TTI codes described above.

An example of a 2D TTI code is the toric code.
In its usual representation, qubits are placed on edges of the lattice and $X$ and $Z$ stabilizers on the vertices and plaquettes, respectively, with supports on the neighboring qubits. To formulate the toric code in the framework of 2D TTI codes, we move the qubits down or to the left along its edge to the nearest vertex so that there are instead $q=2$ qubits on each vertex; see Fig.~\ref{fig:TTIsurfacecode}(a).
The plaquette $Z$ stabilizers stay at the same locations, whereas each vertex $X$ stabilizers is shifted to the neighboring plaquette at its lower left (Fig.~\ref{fig:TTIsurfacecode}(b)(c)). Thus, there is now one stabilizer of each of the two Pauli types placed on each plaquette, which touches a subset of qubits on the neighboring vertices.
Lastly, we refer to the locations of flipped $X$ and $Z$ stabilizers as $X$ and $Z$ excitations, respectively.

\section{Setup}
\label{sec:overview}
Our main result is to present a PTAS for the minimum-weight decoding problem for any 2D TTI code.
\begin{theorem}
    \label{thm:PTAS2DTTI}
    For any 2D TTI code $\mc C$ on an $L\times L$ lattice and fixed $\varepsilon>0$, there is a randomized algorithm with time complexity $L^2(\log L)^{O(1/\varepsilon)}$ that achieves approximation ratio $1+\varepsilon$ for the minimum-weight decoding problem and succeeds with probability at least $1/4$. A derandomized version of the algorithm, which always succeeds, has the same space complexity $L^2(\log L)^{O(1/\varepsilon)}$ and runs in time $L^4(\log L)^{O(1/\varepsilon)}$.
\end{theorem}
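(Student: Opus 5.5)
We follow Arora's scheme, adapted from tours to anyon strings. Begin with a random shifted quadtree dissection of the $L\times L$ torus (after unfolding it to a $2L\times 2L$ box with random offsets $(a,b)\in\{0,\dots,L-1\}^2$). The dissection recursively splits squares into four until squares have constant side length, giving depth $\log L + O(1)$. On each side of a level-$i$ square place $m = O(\log L/\varepsilon)$ evenly spaced portals. Call an error \emph{$r$-light} if, within each square, every side is crossed only at portals and at most $r=O(1/\varepsilon)$ times. Here ``crossing'' refers to the underlying string paths, after decomposing the error restricted to a neighborhood of a boundary into string segments. The proof has three steps: a structure theorem, a dynamic program, and derandomization.

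\emph{Step 1 (structure theorem).} We will show that $\E_{a,b}\bigl[\min\{|e'|: e' \text{ is } r\text{-light with syndrome } \sigma\}\bigr]\le(1+\varepsilon/2)|e_0|$; Markov's inequality then gives success probability at least $1/4$ after adjusting constants.

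Two local surgery properties of 2D TTI codes drive this, playing the role of Arora's patching lemma.
\begin{itemize}
\item \textbf{Rerouting.} A string crossing a line at a non-portal point can be moved to the nearest portal. This costs $O(q)$ times the portal spacing, since one multiplies by a string operator that moves the crossing excitation along the line.
\item \textbf{Patching.} If an error crosses a side segment of length $s$ more than $r$ times, we cut all crossings along that side. The total charge crossing is neutral modulo the $\mathbb Z_2^{2\alpha}$ charges. The crossings can therefore be reconnected, using string operators running along the segment, into at most $|\mathbb Z_2^{2\alpha}|$ (a constant) crossings, at additional weight $O(s)$. This works because the error decomposes into point-like excitations joined by strings, and fusing charges within a group of size $4^\alpha$ leaves a bounded number of net crossings.
\end{itemize}

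Next comes the charging argument. Let $t(\ell)$ be the number of crossings of $e_0$'s support with grid line $\ell$. Then $\sum_\ell t(\ell)=O(|e_0|)$, since each qubit touches $O(1)$ lines after coarse-graining. A line becomes level $i$ with probability $2^i/L$. The rerouting cost on a level-$i$ line is $O(t(\ell)L/(2^i m))$, so its expectation is $O(t(\ell)\log L/m)=O(\varepsilon\, t(\ell))$. For patching, each application at level $j$ reduces the crossing count by at least $r+1-4^\alpha$. Summing over levels with the same probability weights gives expected cost $O(t(\ell)/r)$. Patching must proceed bottom-up so that a modification at one level does not create new excess crossings at finer levels; the patch strings are placed just inside squares and counted against the lines they follow, as in Arora.

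\emph{Step 2 (dynamic program).} For each square in the quadtree, a subproblem consists of a choice of at most $r$ portals on its boundary, each labeled by a charge in $\mathbb Z_2^{2\alpha}$ whose total, combined with the square's interior syndrome, is neutral. There are $(m^{O(r)})\cdot 4^{\alpha r}=(\log L)^{O(1/\varepsilon)}$ such subproblems per square. A leaf square of constant size is solved by brute force: choose the minimum-weight error restricted to the square that realizes the interior syndrome plus the prescribed boundary charges. An internal square combines its four children by enumerating consistent portal charge assignments on the internal sides, again $(\log L)^{O(1/\varepsilon)}$ choices, and taking the minimum of the summed costs. Locality of the stabilizers means qubits near shared boundaries need care; we assign each qubit to a unique square, so that syndrome consistency reduces to charge conservation at portals. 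With $O(L^2)$ squares this gives time $L^2(\log L)^{O(1/\varepsilon)}$, and the output is at most the best $r$-light cost.

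\emph{Step 3 (derandomization).} We enumerate all $L^2$ shifts $(a,b)$ and keep the best result. This gives time $L^4(\log L)^{O(1/\varepsilon)}$ with the same space.

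\emph{Main obstacle.} The hardest step is the patching argument. We must show that cutting an arbitrary error, whose support is not literally a union of disjoint strings, along a segment yields pieces whose boundary charges can be fused into $O(1)$ crossings at cost $O(s)$, without spoiling the syndrome elsewhere. We would first write $e_0$ restricted to a thin strip around the segment as a product of string operators between virtual excitations created by the cut. This uses the equivalence of 2D TTI codes to copies of the toric code, which guarantees that any neutral charge configuration on a strip can be annihilated by strings inside the strip with weight linear in the strip's length.
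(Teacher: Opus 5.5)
\textbf{Gap: syndrome excitations next to dissection lines.} Your overall architecture matches the paper's: a random shifted dissection with portals, bottom-up patching followed by rerouting charged via the $2^i/L$ level probabilities, a dynamic program over portal states, and enumeration of all $L^2$ shifts. However, the step you flag as the main obstacle has a genuine gap, and your sketched fix does not close it. The problem is excitations of $\sigma$ itself lying on plaquettes adjacent to a dissection line. In a general coarse-grained 2D TTI code, the stabilizer on a plaquette next to a line segment $T$ may act only on the qubits at a corner lying on $T$. Any operator that creates or removes an excitation there must then act on $T$ at that spot. So if $\sigma$ has many such excitations along $T$, no replacement equivalent to the restricted error can meet $T$ in $O(1)$ places. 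If even one sits next to a non-portal vertex of $T$, no error with syndrome $\sigma$ has its support on $T$ confined to portals, so no structure theorem of the kind you need can hold for $\sigma$ itself. Your proposed resolution writes the strip restriction as strings between virtual excitations and annihilates a neutral configuration inside the strip at linear cost. That controls the weight, but not the number of places the strings must touch $T$, which is exactly what patching needs. Your rerouting step has the same problem: it presumes the error near $T$ consists of strings that can be slid along $T$ without touching it.

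\textbf{Missing idea: a preprocessing step.} Take the smallest squares of side $s_0\ge 192qw/\varepsilon$, not a constant independent of $\varepsilon$. Here $w$ bounds the number of generators acting on a qubit. The random shift then places each excitation next to a grid line with probability at most $4/s_0$. Move every such excitation one plaquette away from the line by an operator $e'$ of weight $O(q)$ per excitation. This gives $\E[|e'|]\le \frac{24q}{s_0}|\sigma|\le\frac{24qw}{s_0}|e_0|\le\frac{\varepsilon}{8}|e_0|$. Then run the $r$-light dynamic program on $\tilde\sigma=\sigma+\sigma'$ and output $\tilde e_\varepsilon+e'$. A short calculation shows this is a $(1+\varepsilon)$-approximation for $\sigma$ whenever $|e'|\le\frac{\varepsilon}{4}|e_0|$ and $\tilde e_\varepsilon$ is a $(1+\varepsilon/4)$-approximation for $\tilde\sigma$.

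\textbf{Patching and rerouting for $\tilde\sigma$.} Restrict $e_0$ to the three-row strip $\tilde T^+$ around $T$, with two vertices removed at each end. The plaquettes with all four corners in $\tilde T^+$ are the plaquettes adjacent to $T$ away from its ends. On them the restriction's syndrome equals the global one, which is now empty. So all excitations lie in the outer plaquette set $J$ and can be collected through $J$, touching $T$ only near its ends. Rerouting is the same argument applied to each portal-to-portal piece $T_k^+$.

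\textbf{Secondary issue: the dynamic-program interface.} Labeling portals only by charges in $\mathbb Z_2^{2\alpha}$ is too coarse. A charge corresponds to many local syndrome patterns, so matching charges does not guarantee that the glued error reproduces $\sigma$ exactly on stabilizers straddling a shared side. Use as the state the actual Pauli operator on the portal qubits: arbitrary on the $4q$ corner qubits, and weight at most $r$ on each side's interior. This requires defining $r$-lightness by support on boundary qubits rather than by string crossings. The number of states per square remains $(\log L)^{O(1/\varepsilon)}$.
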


Although we present the randomized algorithm with success probability 1/4, its parameters can be easily changed to achieve any constant success probability less than one.
Alternatively, we can run the algorithm multiple times and keep the minimum-weight solution over all runs to increase the success probability.

The $L\times L$ lattice on which the code is defined has vertices with coordinates $(x,y)\in \bb Z_L^2$.
For simplicity, we will consider the case where $L$ is a power of two, although the general setting can be handled with minor modifications, as we describe in Sec.~\ref{subsec:removeassumptions}.
We define a \emph{dissection} of the lattice into squares as illustrated in Fig.~\ref{fig:main}(c).
The lattice, considered a level-$0$ square, is first divided into four squares of side length $L/2$ at locations $x=0, L/2$, $y=0, L/2$, which we call level-$1$ squares. For $i\ge 1$, each level-$i$ square is recursively partitioned into four smaller squares which have level $i+1$ and side length $L/2^{i+1}$.
This process repeats until we arrive at the squares with side length $s_0$. The smallest side length $s_0$ will not depend on $L$ but may depend on the decoding accuracy required. The maximum level is therefore $i_0=\log_2(L/s_0)$.

Relative to the placement of the squares described above, we consider shifts. Concretely, an $(a,b,c,d)$-\emph{shifted dissection} is a translation of the dissection by $(as_0+c, bs_0+d)$, where $a,b,c,d\in \bb Z$ and $0\le a,b < L/s_0$, $0\le c,d < s_0$.
Most of our analysis holds for any fixed $(c,d)$, and we often leave it implicit when this is the case.
Only in Lemma~\ref{lem:randomshiftmovesyndrome} of Sec.~\ref{sec:generalTTI} do we randomize over $(c,d)$.

It will also be useful to consider vertical and horizontal lines along the lattice at distance $s_0$ apart which align with the sides of the squares in the shifted dissection, i.e., at coordinates $x$ where $s_0\mid (x-c)$ or $y$ where $s_0\mid (y-d)$.
Each line contains many line segments which are the sides of the squares.
The level of a line segment is the level of the square that it is a side of, and the level of a line with respect to a shifted dissection is the level of the largest line segments it contains; see Fig.~\ref{fig:main}(c). A level-$i$ line contains $2^i$ level-$i$ line segments.
Note that the locations of valid lines depend on $(c,d)$. A line is at a fixed location, but its level and the line segments contained in it depend on the shift $(a,b)$.\footnote{A vertical line with $x$-coordinate $x$ has level $i$ which is the smallest positive integer such that $\frac{L}{2^i}\mid (x-as_0-c)$. Similarly, a horizontal line with $y$-coordinate $y$ has level $i$ which is the smallest positive integer such that $\frac{L}{2^i}\mid (y-bs_0-d)$.}

Syndromes and errors can be decomposed with respect to a square $S$ in a shifted dissection. Recall that each stabilizer generator is placed on a plaquette of the lattice. We can therefore express any syndrome $\sigma$ as a disjoint union $\sigma = \sigma_S\sqcup \sigma_{S^c}$, where $\sigma_S$ and $\sigma_{S^c}$ are the restrictions of $\sigma$ to the stabilizers \emph{inside} and \emph{outside} of $S$, respectively. Here, we identify a binary vector with its set of nonzero components. Qubits, which support errors, are on vertices which are either in the interior of $S$, in the exterior of $S$, or on the \emph{boundary} of $S$ (denoted $\partial S$) comprising its line segments.
The level-$0$ square has no exterior or boundary due to periodic boundary conditions. Because the stabilizer generators only act on qubits on the adjacent vertices, errors on qubits in the interior or exterior of $S$ only cause excitations on plaquettes inside and outside $S$, respectively.
Errors on $\partial S$ may cause excitations on plaquettes both inside and outside $S$. The boundary of a square can be further decomposed into \emph{interiors of line segments}, which exclude their endpoint vertices, and \emph{corners}. An error on a qubit in the interior of a level-$i$ line segment can only cause excitations inside the level-$i$ squares on either side of the line segment, whereas one on a corner qubit may cause excitations inside up to four level-$i$ squares---the ones containing that corner.
There are at most $4q$ qubits on the corners of any square.
Note that a qubit in the interior of a line segment may be on a corner of a smaller square. An error on a line may cause excitations on either side of the line.

For every square $S$, we will identify a subset of the qubits $P_S$ on $\partial S$ as \emph{portals} (see Fig.~\ref{fig:main}(c)). If $T$ is a side of $S$, we may also write $P_T$ to denote the portals of $S$ on $T$. The portals of a square are the qubits supported on the following vertices in $\partial S$: all corners of $S$, $m'$ vertices equally spaced on each side of $S$ (or all vertices if $m'$ is greater than the side length), all vertices that are one away from any of the previously mentioned vertices, and the eight vertices that are distance two away from the corners. If a square has side length less than $2m'$, then every vertex is included.
The number of portals on the interior of each line segment is at most $m=(3m'-2)q$, which is fixed for all squares; the portals on the larger squares are spaced farther apart.
We will choose $m=O(\frac{1}{\varepsilon}\log L)$ in our proof. From our definition, if $S_1$, $S_2$, $S_3$, $S_4$ are the largest proper subsquares of a square $S$, then each portal of $S$ is the portal of some subsquare, i.e., $P_S\subseteq \bigcup_{i=1}^4 P_{S_i}$.

\section{Dynamic Program}
Consider a 2D TTI code $\mc C$. We wish to find a $(1+\varepsilon)$-approximation to the minimum-weight decoding problem for a given syndrome $\sigma$. After dissecting the lattice into squares and placing portals on their boundaries as described in Sec.~\ref{sec:overview}, we find a solution that does not use too many portals.

\begin{definition}
    An error is $r$-light with respect to a choice of portals defined by $m$ and a shifted dissection $(a,b,c,d)$ if its support on
    any line segment $T$ is contained in the portals $P_T$ and its weight on the interior of $T$ is at most $r$.
\end{definition}

In particular, an $r$-light error intersects the boundary $\partial S$ of any square $S$ at its portals $P_S$ and at most $r$ times on any side of $S$.
Also note that there is no restriction on the support of an $r$-light error on the $4q$ corner portals of $S$ and any such support does not contribute to the weight on the interiors of the sides.

\begin{figure}[htpb]
    \centering
    \includegraphics[width=0.85\linewidth,trim={4cm 4cm 4cm 4cm},clip]{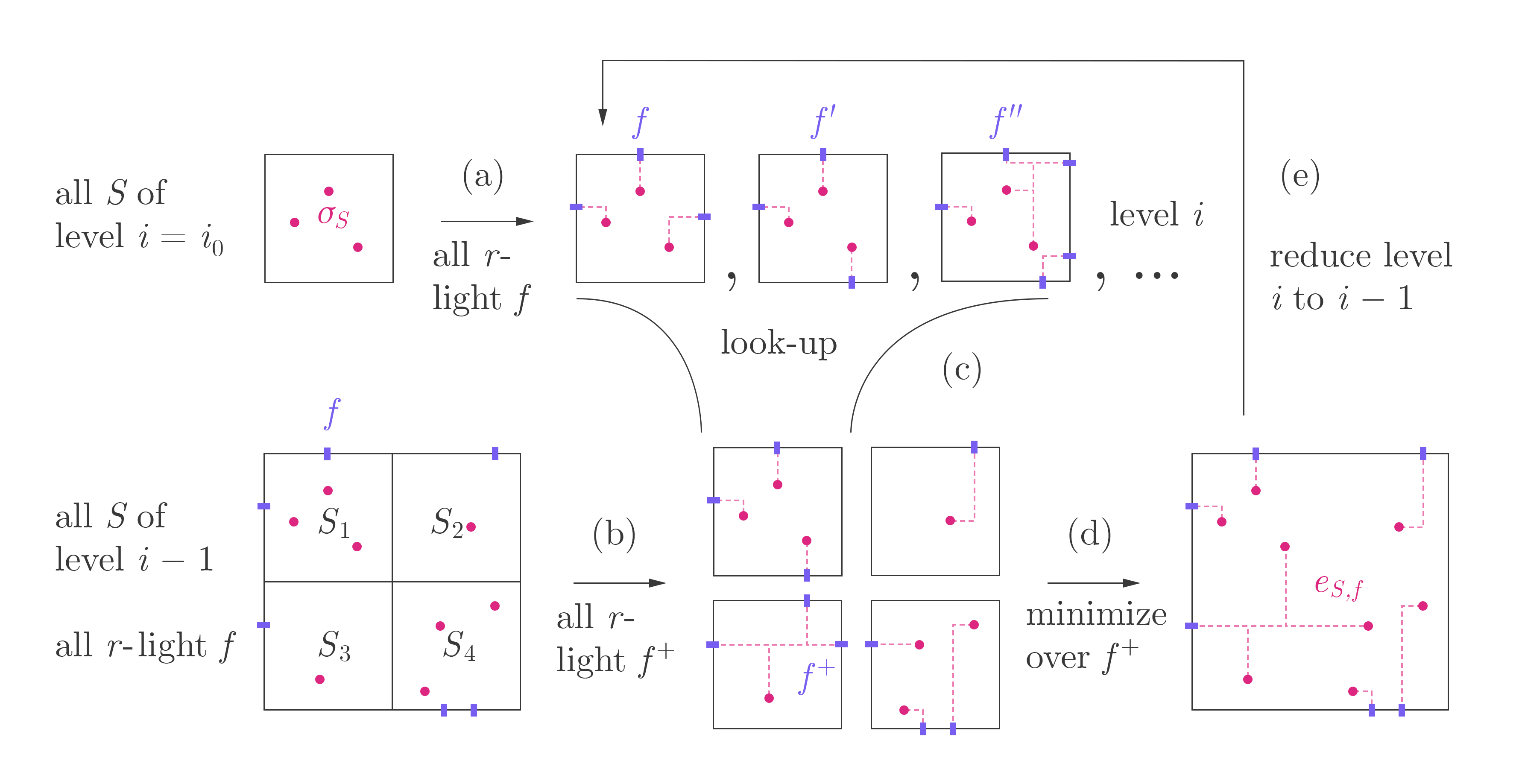}
    \caption{
    Finding the minimum-weight $r$-light error using dynamic programming.
    (a) In the base cases, for every level-$i_0$ square $S$, we consider all $r$-light $f$ supported on the portals $P_S$ and exhaustively search for the solution of the decoding subproblem.
    (b) To solve the decoding subproblems for a bigger square $S$ of level $i-1$ and an $r$-light $f$ on $P_S$, we consider all $r$-light lifts $f^+$.
    (c) The $r$-light lift $f^+$ defines decoding subproblems on the level-$i$ subsquares of $S$, whose solutions we look up.
    (d) Minimizing over $f^+$ gives the solution for the decoding subproblem defined by $S$ and $f$, (e) which we store in the lookup table.
    }
    \label{fig:DPoverview}
\end{figure}

In this section, we present Algorithm~\ref{alg:DP} (illustrated in Fig.~\ref{fig:DPoverview}) that finds the minimum-weight $r$-light error for a given syndrome $\sigma$.
When there are at most $m=O(\frac{1}{\varepsilon}\log L)$ portals on each side of every square and $r=O(\frac{1}{\varepsilon})$, our algorithm has time complexity $L^2(\log L)^{O(1/\varepsilon)}$. The analysis holds for any shift $(a,b,c,d)$, which we leave implicit.
In Secs.~\ref{sec:structuretheorem} and~\ref{sec:generalTTI}, we will show that the optimal $r$-light solution is likely a good approximation of the minimum-weight solution over a random choice of $(a,b,c,d)$ (see Theorem~\ref{thm:structure} and Lemma~\ref{lem:randomshiftmovesyndrome}).

\begin{algorithm}[H]
	\caption{Dynamic program}
	\label{alg:DP}
	\textbf{Input:} 2D TTI code $\mc C$; shifted dissection with maximum level $i_0$, squares $\{S\}$, and portals $\{P_S\}$; syndrome $\sigma$\\
	\textbf{Output:} The minimum-weight $r$-light error with syndrome $\sigma$
	\begin{algorithmic}[1]
        \LComment{Base cases; Fig.~\ref{fig:DPoverview}(a)}
        \ForAll{$S\gets$ level-$i_0$ square}
            \ForAll{$f\gets$ $r$-light error supported on $P_S$}
                \State $\mu_{f,S} \gets$ excitations caused by $f$ inside $S$
                \State $e_{S,f}\gets$ solution to the decoding subproblem with input $S$ and $\sigma_S+\mu_{f,S}$ if it exists\Comment{exhaustive search}
            \EndFor
        \EndFor
        \LComment{Recursive steps; Fig.~\ref{fig:DPoverview}(b)-(e)}
        \For{$i=i_0-1, \dots, 0$}
            \ForAll{$S\gets$ level-$i$ square}
                \ForAll{$f\gets$ $r$-light error supported on $P_S$}
                    \State $S_1, S_2, S_3, S_4 \gets$ level-$(i+1)$ subsquares of $S$
                    \ForAll{$f^+\gets$ $r$-light lift of $f$ to $\bigcup_{i=1}^4 P_{S_i}$}
                        \For{i=1, 2, 3, 4}
                            \State $f_i^+\gets$ restriction of $f^+$ to ${P_{S_i}}$
                            \State $e_{S_i,f_i^+}\gets$ solution of the decoding subproblem with input $S_i$ and $\sigma_{S_i}+\mu_{f_i^+,S_i}$ \Comment{lookup table}
                        \EndFor
                        \If{all $e_{S_i,f_i^+}$ exist}
                            \State $e_{S,f}^{f^+}\gets \res{f^+}{\bigcup_{i=1}^4 P_{S_i}\setminus \partial S} + \sum_{i=1}^4 e_{S_i,f_i^+}$
                        \EndIf
                    \EndFor
                    \State $e_{S,f}\gets$ $e_{S,f}^{f^+}$ with minimum weight if at least one exists
                \EndFor
            \EndFor
        \EndFor
        \State $e\gets e_{S,I}$ where $S$ is the level-$0$ square
        \State \Return $e$
	\end{algorithmic}
\end{algorithm}

The algorithm uses dynamic programming to break down the decoding problem on the entire code into smaller subproblems on squares of the code. This approach is reminiscent of renormalization-group-type decoders since we coarse-grain the lattice into self-similar regions~\cite{HarringtonThesis,DuclosCianci2010,BravyiHaah}.
For us, each subproblem will be defined by the original excitations inside the square as well as ones that may have been moved inside $S$ through the portals of $S$ by a fixed set of errors supported there.
We must find the minimum-weight $r$-light solution in the interior of $S$ consistent with this syndrome. Note that we may not move excitations outside of $S$ through the boundary in this solution. However, by iterating through the possible $r$-light errors supported on the portals of smaller subsquares and piecing together their optimal solutions, we are able to solve the subproblem for a larger square.

\begin{definition}
    A decoding subproblem is specified by a square $S$ and a set of excitations inside $S$ as input, and the task is to find a minimum-weight $r$-light solution contained in the interior of $S$ or determine that no valid solution exists.
\end{definition}

We now describe the algorithm. The base cases of the dynamic program consist of the following decoding subproblems (Fig.~\ref{fig:DPoverview}(a)).
Let $S$ be a level-$i_0$ square and $\sigma_S$ be the restriction of the global syndrome $\sigma$ to the stabilizers inside $S$.
Let $f$ be an $r$-light error supported on the portals of $S$ and $\mu_{f,S}$ be its syndrome restricted to the inside of $S$. For every such $S$ and $f$, we solve the decoding subproblem given by $S$ and $\sigma_S + \mu_{f,S}$ by exhaustive search. The solution $e_{S,f}$, or the fact that none exists, is stored in a lookup table.
If there are multiple solutions of the same minimal weight, only one is stored.

Now assume that we have the optimal solution to every decoding subproblem given by a square of level greater than $i$ and the set of excitations arising from an $r$-light error supported on the portals of the square. Consider a decoding subproblem defined by a level-$i$ square $S$ and an arbitrary $r$-light error $f$ supported on its portals. (Note that for any square, the number of portals does not depend on its level.)
We wish to find its solution $e_{S,f}$. Let $S_1$, $S_2$, $S_3$, $S_4$ be the four subsquares of $S$ at level $i+1$.
To use the solutions from the decoding subproblems of the subsquares, we need to consider $r$-light errors $f^+$ supported on the portals of the subsquares $\bigcup_{i=1}^4 P_{S_i}$ such that $\res{f^+}{\partial S} = f$ (Fig.~\ref{fig:DPoverview}(b)). For every $r$-light error $f^+$, which we call an $r$-light lift of $f$, we look up the solutions $e_{S_i,f^+_i}$ to the decoding subproblems specified by $S_i$ and $\sigma_{S_i}+\mu_{f^+_i,S_i}$ for $i\in\{1,2,3,4\}$, where $f^+_i$ is the restriction of $f^+$ to $P_{S_i}$ (Fig.~\ref{fig:DPoverview}(c)). If valid solutions exist for all $i$, this gives a candidate solution
\begin{equation}
    e_{S,f}^{f^+} = \res{f^+}{\bigcup_{i=1}^4 P_{S_i}\setminus \partial S} + \sum_{i=1}^4 e_{S_i,f_i^+}.
\end{equation}
We then store the minimum-weight $e_{S,f}^{f^+}$ over all $f^+$ that have valid solutions for all $i$ in the lookup table as the solution $e_{S,f}$ (or that no solution exists if no such $f^+$ exists), as in Fig.~\ref{fig:DPoverview}(d).

In the final iteration, we solve the decoding subproblem for the level-$0$ square $S$, which is the entire lattice. That is, we take $S_1$, $S_2$, $S_3$, $S_4$ to be the four level-$1$ squares, $f=I$ since $S$ has no boundary, and output $e$ as the minimum-weight $e_{S,f}^{f^+}$ over all valid $f^+$.

\begin{theorem}
    \label{thm:DP}
    Algorithm~\ref{alg:DP} finds the minimum-weight $r$-light error for the given syndrome $\sigma$. If $m=O(\frac{1}{\varepsilon}\log L)$ and $r=O(\frac{1}{\varepsilon})$, it has time complexity $L^2(\log L)^{O(1/\varepsilon)}$.
\end{theorem}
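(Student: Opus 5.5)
Correctness will be proved by induction on the level, from $i_0$ down to $0$, with the invariant: for every square $S$ and every $r$-light $f$ supported on $P_S$, the stored $e_{S,f}$ is a minimum-weight $r$-light error supported on qubits strictly inside $S$ (interior vertices together with portals of subsquares not on $\partial S$) whose syndrome inside $S$ equals $\sigma_S+\mu_{f,S}$, and the table records non-existence if no such error exists. The base case at level $i_0$ holds by exhaustive search. A level-$i_0$ square has constant side $s_0$, so it contains $O(s_0^2 q)$ qubits and the search costs $O(1)$ per subproblem, although the constant depends on $s_0$.

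For the inductive step, fix $S$ and $f$. \emph{Soundness:} each candidate $e^{f^+}_{S,f}$ is $r$-light and has the correct syndrome. Inside each $S_i$, the excitations come from $f_i^+$ (which comprises $f$ together with the added internal portal errors) and from $e_{S_i,f_i^+}$; by construction these cancel $\sigma_{S_i}$. Qubits in the interior of a subsquare only affect plaquettes of that subsquare, and portal qubits are accounted for through $\mu_{f_i^+,S_i}$. $r$-lightness on the internal segments is inherited from $f^+$ being an $r$-light lift, and each $e_{S_i,f_i^+}$ is itself $r$-light. \emph{Optimality:} let $e^*$ be an optimal solution for $(S,f)$. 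Set $f^+:=(f+e^*)|_{\bigcup_i P_{S_i}}$, which is an $r$-light lift because $e^*$ is $r$-light and its support on the internal lines lies in portals. The remainder of $e^*$ restricted to the interior of $S_i$ then solves subproblem $(S_i,f_i^+)$. By induction it has weight at least $|e_{S_i,f_i^+}|$, so $|e^{f^+}_{S,f}|\le|e^*|$.

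I expect the main care point to be the bookkeeping in this decomposition. The pieces on different subsquares must have disjoint supports so that weights add, and $f$ (which lies on $\partial S$) must not be double counted. Corner qubits of subsquares that lie inside $S$ can affect up to four subsquares, but since they are portals they belong to $f^+$, and their effect is captured by each $\mu_{f_i^+,S_i}$. Because $P_S\subseteq\bigcup P_{S_i}$, the lift is well defined. At level $0$ we take $f=I$, which yields the global minimum-weight $r$-light error.

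For the runtime, there are $O(L^2/s_0^2)$ squares in total. The number of $r$-light $f$ on $P_S$ is bounded by choosing at most $r$ qubits among $m$ on each of the four sides, plus the $2^{4q}$ choices on the corners, giving $(m+1)^{4r}2^{4q}=(\log L)^{O(1/\varepsilon)}$. A lift $f^+$ is determined by its values on the four internal segments and the internal corners, giving another $(m+1)^{O(r)}=(\log L)^{O(1/\varepsilon)}$ choices. Each candidate requires four lookups and an $O(\log L)$-size assembly. The total time is therefore $L^2(\log L)^{O(1/\varepsilon)}$.
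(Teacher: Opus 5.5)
Your proposal is correct and follows the paper's proof essentially step for step. It uses the same level-by-level induction and the same optimality argument via the lift $f^+=(f+e^*)|_{\bigcup_i P_{S_i}}$. The runtime comes from the same count of squares $\times$ boundary configurations $\times$ lifts. The one slip is in that count: you enumerate supports rather than Pauli operators. Each side should carry an extra factor $3^{r}$, and the corners contribute $4^{4q}$ rather than $2^{4q}$. This leaves the $L^2(\log L)^{O(1/\varepsilon)}$ bound unchanged.
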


\begin{proof}
    We first show correctness of the algorithm. In particular, we claim that each $e_{S,f}$ is the solution to the decoding subproblem with input $S$ and $\sigma_S + \mu_{f,S}$. The base cases for level-$i_0$ squares is clear. Now consider a lower-level square $S$ and $r$-light error $f$ supported on $P_S$. All the candidate solutions $e_{S,f}^{f^+}$ are $r$-light by construction since $f^+$ and each $e_{S_i,f_i^+}$ are $r$-light, and only one of them can overlap with the interior of any line segment.
    
    Let $\tilde e_{S,f}$ be the true solution of the decoding subproblem. Since $\tilde e_{S,f}$ and $f$ are $r$-light and disjoint, and there is no interior of any line segment that intersects them both, their union is also $r$-light. Therefore, $\res{(\tilde e_{S,f} + f)}{\bigcup_{i=1}^4 P_{S_i}}$ is one of the $r$-light lifts $f^+$ that is considered during the algorithm.
    A minimum-weight $r$-light solution will be optimal in the interior of each subsquare. Indeed, any combination of $r$-light errors for the subsquares can be combined with $f^+$ on their boundaries to give an $r$-light error on $S$, so the minimum-weight ones should be chosen. Therefore, the solution $e_{S,f}$ pieced together from those in the lookup table $e_{S_i,f_i^+}$ will have the same optimal weight as $\tilde e_{S,f}$. The minimum-weight $r$-light error giving syndrome $\sigma$ is the decoding subproblem with input the level-$0$ square and $\sigma$, so the output $e$ of the algorithm is correct.    
    
    Now let us analyze the time complexity of the algorithm. There are $4^i$ squares of any level $i$, giving a total of $\sum_{i=0}^{i_0}4^i=O(L^2)$ squares. For each square, the number of $r$-light errors supported on its portals is at most
    \begin{equation}
        4^{4q}\left(\sum_{r'=0}^r\binom{m}{r'}3^{r'}\right)^4 \le 4^{4q+4r}(r+1)^4 m^{4r},
    \end{equation}
    since at most $r$ of the $m$ portals on each side could support one of the three nontrivial Pauli operators and the support on the $4q$ corner portals is arbitrary. Therefore, the total number of subdecoding problems that needs to be solved is $O(L^24^{4q+4r}r^4m^{4r})$.
    Each base case subdecoding problem takes time $2^{O(s_0^2)}$ using exhaustive search, which is constant.
    Decoding subproblems at level $i<i_0$ are solved by iterating through $r$-light lifts, finding the corresponding solutions to subdecoding problems at level $i+1$, and piecing them together.
    Because there are four line segments that are shared between two of the four level-$(i+1)$ squares as well as one corner shared by all four subsquares, the number of $r$-light lifts, and hence the number of solutions to retrieve in the lookup table, is at most
    \begin{equation}
        4^{q}\left(\sum_{r'=0}^r\binom{m}{r'}3^{r'}\right)^4 \le 4^{q+4r}(r+1)^4 m^{4r}.
    \end{equation}
    Therefore, for $m=O(\frac{1}{\varepsilon}\log L)$ and $r=O(\frac{1}{\varepsilon})$, the total time complexity of the algorithm is 
    \begin{equation}
    O(L^24^{5q+8r}r^8m^{8r}) = L^2(\log L)^{O(1/\varepsilon)},
    \end{equation}
    as $\varepsilon$ and $q$ do not depend on $L$.
\end{proof}

    We also note that the space complexity can be bounded as
    \begin{equation}
        \sum_{i=0}^{i_0}4^i O(4^{4q+4r}r^4m^{4r}) O((L/2^i)^2) = O(L^2 4^{4q+4r} r^4m^{4r}i_0)=O(L^2(\log L) r^4m^{4r}),
    \end{equation}
    since there are $4^i O(4^{4q+4r}r^4m^{4r})$ decoding subproblems on level-$i$ squares, each of which has a solution of size the area of the square $O((L/2^i)^2)$. Again, this expression is $L^2(\log L)^{O(1/\varepsilon)}$ when $m=O(\frac{1}{\varepsilon}\log L)$ and $r=O(\frac{1}{\varepsilon})$, which achieves only the trivial bound as it is the same as the time complexity.

\section{Structure Theorem}
\label{sec:structuretheorem}

Having shown that optimal $r$-light errors can be found efficiently, we now argue that they will give a good approximate solution to the minimum-weight decoding problem.
The main goal of this section is to prove the Structure Theorem, which states that an error may be transformed to an $r$-light one without significantly increasing its weight.
Our approach mimics closely Arora's proof of PTAS for Euclidean problems~\cite{AroraTSP}. Namely, we prove the Structure Theorem by first reducing the number of intersections of the error with all line segments, a process called \emph{patching}, and then shifting the intersections to the portals, a process called \emph{rerouting}. 
We define two properties of a syndrome $\sigma$, called the Patching and Rerouting Properties, that capture whether these two steps are possible. Together, they imply the Structure Theorem.
Because the results of this section hold for any shift $(c,d)$, we leave those values implicit.

\begin{theorem}[Structure Theorem]
    \label{thm:structure}
    Let $\mc C$ be a 2D TTI code, $e_0$ be an error with syndrome $\sigma$, and $\varepsilon>0$.
    Consider a dissection with at most $m=c_1(\log L)/\varepsilon$ portals on each line segment, and let $r=c_2/\varepsilon$, where $c_1$ and $c_2$ are constants.
    If $\sigma$ satisfies the Patching and Rerouting Properties~\ref{lem:patching} and~\ref{lem:rerouting},
    then with probability at least 1/2 over a random shift $(a,b)$, there exists an $r$-light error $e$ with syndrome $\sigma$ and weight $|e|\le (1+\varepsilon)|e_0|$.
\end{theorem}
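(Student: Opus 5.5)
We will follow Arora's argument: first patch the error so that it crosses each line segment only a bounded number of times, and then reroute the remaining crossings to portals. Since the Patching and Rerouting Properties~\ref{lem:patching} and~\ref{lem:rerouting} are stated later, we take them in the natural form the argument needs.

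\emph{Patching Property.} If an error crosses the interior of a line segment $T$ of side length $\ell$ in more than $r$ qubits, we can replace it by an equivalent error, with the same syndrome, that crosses the interior of $T$ at most $r$ times (or a constant number of times). The added weight is $O(\ell)$ in patching strings along $T$. The removed crossings pay for this addition, in the sense that each application removes $\Omega(r)$ crossings and costs $O(\ell)$.

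\emph{Rerouting Property.} A crossing of $T$ at a non-portal qubit can be moved to the nearest portal. The extra weight is $O(\ell/m')$, proportional to the spacing between portals, and the syndrome is unchanged.

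The proof starts from $e_0$ and charges the added weight to intersections of $e_0$ with the grid lines. The lines are at fixed locations, and each qubit lies on at most two of them (one vertical, one horizontal). Let $t(\lambda)$ be the number of qubits of $e_0$ on line $\lambda$; then $\sum_\lambda t(\lambda)\le 2|e_0|$. Over the random shift $(a,b)$, a fixed line has level $i$ with probability at most $2^i s_0/L$. A level-$i$ line contains segments of all levels $j\ge i$, and the segments of level $j$ have length $L/2^j$.

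\emph{Patching cost.} We patch bottom-up, from level $i_0$ to level $0$, on every segment. For a line $\lambda$ and each level $j$, the number of patching events at level $j$ is at most $t(\lambda)/\Omega(r)$ plus lower-order corrections. Arora handles this by noting that patches only add crossings near corners, so the total count over all levels stays $O(t(\lambda)/r)$. Each event at level $j$ costs $O(L/2^j)$. Taking the expectation over the level of $\lambda$ gives
\[
\sum_i \frac{2^i s_0}{L}\sum_{j\ge i} (\#\text{events at } j)\,\frac{L}{2^j}=O\!\left(\frac{t(\lambda)}{r}\right)\sum_{j}O(1)\cdot\frac{2^{i}}{2^{j}}.
\]
This needs the sharper bookkeeping: charge a level-$j$ patch to the level-$j$ segment being patched, which is counted only when $\lambda$ has level at most $j$, with probability $\le 2^j s_0/L$. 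The expected patching cost per line is then $O(t(\lambda)\log L/r)$ in the naive count. Arora's actual bound is better, $O(t(\lambda)/r)$: conditioned on level $i$, patching is done only on segments of level $\ge i$, and a potential argument shows that the total patching cost across all levels is $O(t/r)\cdot$(length scale). I would reproduce that argument. Summed over lines, with $r=c_2/\varepsilon$ for $c_2$ large, the expected patching cost is at most $\varepsilon|e_0|/4$.

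\emph{Rerouting cost.} Each crossing on a line of level $i$ moves by $O(L/(2^i m'))$. The expectation over levels is $\sum_i (2^i s_0/L)\,O(L/(2^i m'))=O(\log L/m')$ per crossing. With $m=c_1\log L/\varepsilon$, the expected rerouting cost is $\le \varepsilon|e_0|/4$. The crossings created by patching are at most a constant factor more than the original ones, so they are absorbed into the same bound.

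\emph{Conclusion.} By linearity the expected total extra weight is $\le\varepsilon|e_0|/2$. Markov's inequality then gives an extra weight of at most $\varepsilon|e_0|$ with probability $\ge 1/2$. The resulting error has syndrome $\sigma$, uses only portals, and has at most $r$ crossings per segment interior, so it is $r$-light.

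The main obstacle is the patching accounting. We have to make sure that patching a level-$j$ segment does not create many new crossings on perpendicular segments or at corners. Corners are unrestricted in the $r$-light definition, which helps. We also have to make sure that the patching order (finer levels first) keeps the telescoping bound $O(t(\lambda)/r)$ rather than a $\log L$ factor. I would first try to confine each patch to within a constant distance of $T$, so that new crossings occur only at the segment endpoints, where corner portals absorb them.
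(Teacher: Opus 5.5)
Your outline is the paper's: patch finest-first, then reroute, charge costs to lines, use $\Pr[\operatorname{level}(\ell)=i]\le 2^i s_0/L$, and finish with Markov. Your rerouting estimate is exactly the paper's.

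\textbf{The gap: the patching cost.} This step is the crux of the theorem, and you do not complete it. The bound you actually derive is $O(t(\lambda)\log L/r)$ per line, and you defer removing the $\log L$ to an unspecified ``potential argument.'' With the $\log L$ you would need $r=\Omega(\log L/\varepsilon)$. Then $m^{O(r)}$ becomes $L^{\Theta(\log\log L/\varepsilon)}$ and the dynamic program is no longer polynomial, so this step cannot be left open.

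No potential argument is needed; the two facts you already state combine directly. Let $X_{\ell,j}$ be the number of level-$j$ segments of a vertical line $\ell$ that get patched. It depends only on the vertical shift $b$, which fixes where the segments sit along $\ell$. It is the same for every horizontal shift $a$ for which $\operatorname{level}(\ell)\le j$. Each patch replaces more than $(r-A)/\beta'-A$ points of $\ell$ by at most $\beta$. So it removes at least $r'=(r-A)/\beta'-A-\beta$ of them, which gives $\sum_j X_{\ell,j}\le t(\ell,e_0)/r'$ summed over \emph{all} levels at once. Swapping the order of summation,
\[
\E_a[\text{cost to patch }\ell]\le\sum_{i}\frac{2^i s_0}{L}\sum_{j\ge i}X_{\ell,j}\frac{gL}{2^j}
= s_0g\sum_j\frac{X_{\ell,j}}{2^j}\sum_{i\le j}2^i\le 2s_0g\sum_j X_{\ell,j}\le\frac{2s_0g\,t(\ell,e_0)}{r'} .
\]
The geometric sum over $i\le j$ cancels the $2^{-j}$ length factor. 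Your $\log L$ came only from bounding each $X_{\ell,j}$ separately by $t/r$.

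\textbf{A smaller bookkeeping point.} Patching whenever there are more than $r$ crossings and leaving ``at most $r$'' does not give a per-patch decrease of $\Omega(r)$. It also does not guarantee $r$-lightness at the end. Two corrections are needed:
\begin{itemize}
\item The patch must leave only a constant $\beta$ points on $T$.
\item The threshold must be lowered to $(r-A)/\beta'-A$.
\end{itemize}
The lower threshold is needed for two reasons. First, later patches of longer perpendicular segments add up to $A$ points to shorter segments. These sit at distance one or two from the endpoints, not at the corners, so they do count toward the interior weight; they lie on portals only because extra portals are placed there by construction. Second, rerouting may multiply the count by $\beta'$. With $r=c_2/\varepsilon$ and $c_2$ large enough that $r'\ge 16s_0g/\varepsilon$, the rest of your argument then goes through.
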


The error $e$ in Theorem~\ref{thm:structure} is obtained by replacing components of the original error $e_0$ on the interiors of line segments with equivalent operators. These equivalent operators will be supported on a slightly larger region, which we call the \emph{buffer} of a line segment. The buffer $B_T$ of a vertical (horizontal) line segment $T$ consists of the interior of $T$ as well as its horizontal (vertical) translations by up to two units. For the special case of the toric code, we may instead define the buffer to include translations by only one unit; see Fig.~\ref{fig:TTIsurfacecode}(d).
The following lemma characterizes the overlap between a line segment and the buffers of other line segments.
In particular, buffers of shorter line segments do not intersect longer line segments, and buffers of all longer line segments can only intersect a shorter line segment at a constant number of locations.

\begin{lemma}
    \label{lem:bufferproperties}
    Let $s_0\ge 3$. There is a constant $A$ such that for any level-$i$ line segment $T$ the following holds.
    \begin{enumerate}
        \item If $T'$ is a different line segment of level at least $i$, then $(B_{T'}\setminus T')\cap T=\emptyset$.
        \item $\left|\left(\bigcup_{T': \operatorname{level}(T')<i} (B_{T'}\setminus T')\right)\cap T\right| \le A$.
    \end{enumerate}
\end{lemma}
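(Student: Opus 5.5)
Both parts follow from elementary geometry of the dissection, using that all lines are spaced $s_0\ge 3$ apart and buffers extend only two units perpendicular to their segment. Fix a level-$i$ segment $T$, say vertical at $x$-coordinate $x_T$, lying on a side of a level-$i$ square $S$. For a segment $T'$, the set $B_{T'}\setminus T'$ consists of lattice points at perpendicular distance $1$ or $2$ from the line containing $T'$, with the parallel coordinate ranging over the interior of $T'$. I will split into two cases: $T'$ parallel to $T$, and $T'$ perpendicular to $T$.

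\emph{Parallel case.} If $T'$ is vertical, its buffer points have $x$-coordinates within distance $1$ or $2$ of $x_{T'}$, but not equal to $x_{T'}$. If $x_{T'}=x_T$, the buffer points are off the line of $T$, so the intersection is empty. If $x_{T'}\neq x_T$, then $|x_{T'}-x_T|\ge s_0\ge 3$, since all lines lie on the grid spaced $s_0$ apart. Again the intersection is empty. This holds for every level of $T'$, so parallel segments contribute nothing to either part.

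\emph{Perpendicular case.} If $T'$ is horizontal at height $y_{T'}$, then $(B_{T'}\setminus T')\cap T$ consists of the points of $T$ with $y\in\{y_{T'}\pm1,y_{T'}\pm2\}$ whose $x$-coordinate $x_T$ lies in the interior of $T'$. Because of the interior restriction, this requires that the line of $T$ crosses $T'$ strictly inside $T'$. For part 1, take $T'$ of level $j\ge i$, a side of a level-$j$ square $S'$. The level-$i$ vertical line through $x_T$ is a line of the level-$i$ dissection, hence also of the finer level-$j$ dissection. So it never passes through the interior of a side of a level-$j$ square; it meets $T'$ at most at an endpoint. Hence the intersection is empty. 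I will need to check the bookkeeping that the line containing $T$ has level at most $i$, which holds by definition of the level of a segment on that line.

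For part 2, consider all horizontal $T'$ of level $<i$ with $y_{T'}$ within distance $2$ of $T$'s $y$-range. Only points of $T$ within distance $2$ of some horizontal line of level $<i$ can be hit. The horizontal lines of level $<i$ are spaced at least $L/2^{i-1}$ apart, which is at least twice the length $L/2^i$ of $T$. Hence at most two such lines lie within distance $2$ of $T$'s $y$-range, or a bounded number in any case, noting that $T$ has length at least $s_0$. Each such line contributes at most $4$ points of $T$. Even if several segments on one line matter, they share the same $y$-values, so the point count is unchanged. This gives a constant $A$, for instance $A=8$, times $q$ if we count qubits rather than vertices.

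The main obstacle is careful handling of the boundary cases. These are endpoints of $T$ and $T'$, which the interior restriction excludes; periodic wraparound; and the level-$0$ convention. They must be handled so that "level at least $i$" correctly forbids crossings. I expect these to be routine once the nesting of dissection lines is stated precisely.
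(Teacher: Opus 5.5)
Your proposal is correct and follows essentially the same elementary-geometry argument as the paper: nested dissection grid lines spaced $s_0\ge 3$ apart, and buffers extending only two units perpendicular to their segment. You organize the cases more explicitly by orientation. The only difference is the constant in part 2. You count up to four points from each of at most two nearby lines, giving $A=8q$. The paper notes that the only relevant lines of level $<i$ pass through the two endpoints of $T$, and each contributes just the two inward vertices, giving $A=4q$. Either constant suffices for the lemma.
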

\begin{proof}
    If $T'$ has level at least $i$, then either it is entirely contained in $T$, it is disjoint from $T$, or it intersects $T$ at the endpoint of $T'$. In all cases, $(B_{T'}\setminus T')\cap T=\emptyset$. If $T'$ has level less than $i$, i.e., $\operatorname{level}(T')<i$, then an endpoint of $T$ may be part of the interior of $T'$. In this case, $B_{T'}\setminus T'$ intersects $T$ at the two vertices one and two units away from the endpoint. Thus, there are only four vertices that can intersect $B_{T'}\setminus T'$ for a lower-level $T'$, and we may take $A=4q$.
\end{proof}

\begin{figure}[htpb]
    \centering
\includegraphics[width=0.75\linewidth,trim={1cm 1cm 1cm 1cm},clip]{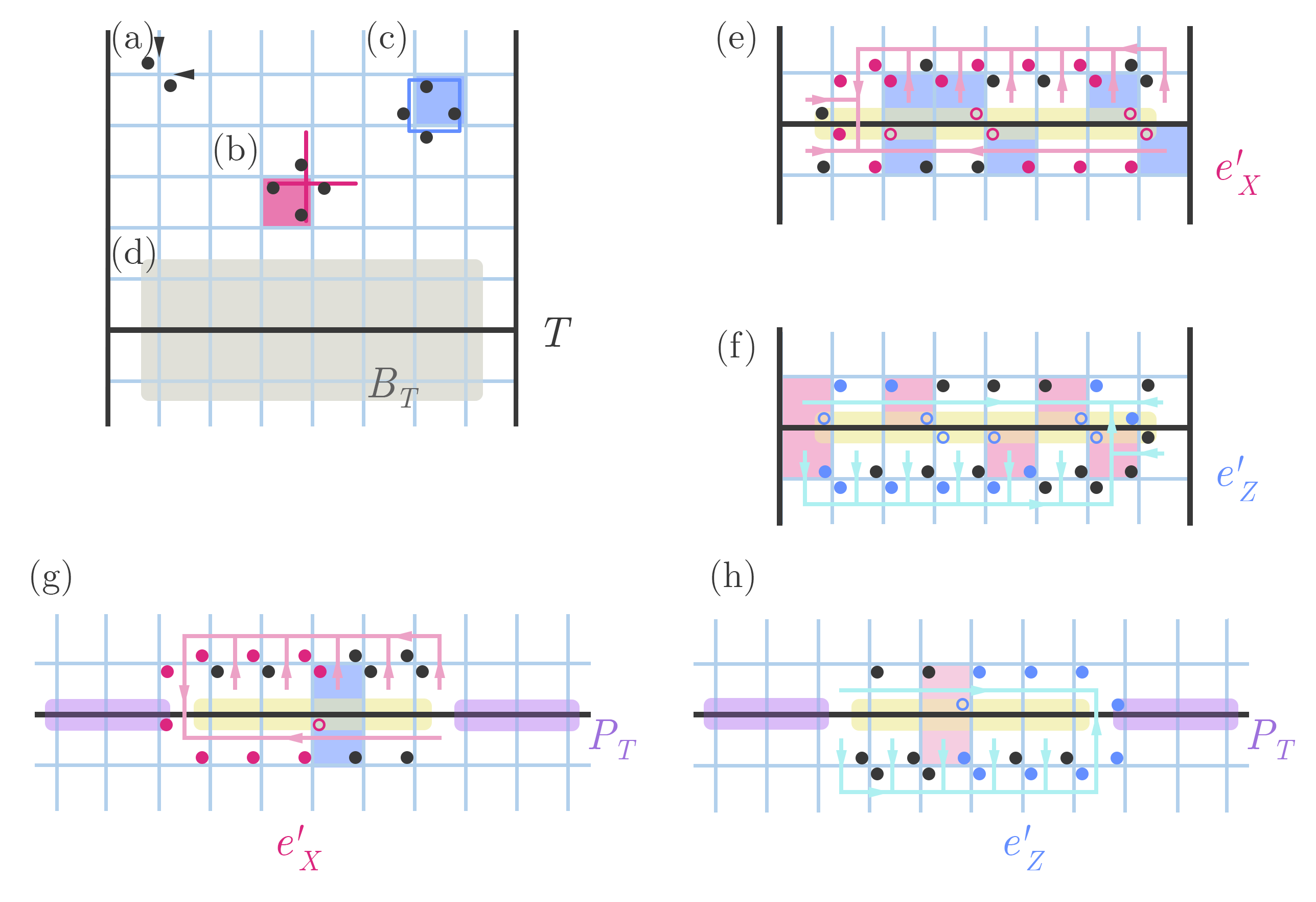}
    \caption{
    (a) To view the toric code in the framework of 2D TTI codes, we move qubits (black dots) from the edges of the lattice in its usual representation to the vertices; (b) and (c) depict $X$ and $Z$ stabilizers, respectively.
    (d) The buffer $B_T$ of a line segment $T$.
    (e) For an $X$ error $e_X$ (hollow red dots in the yellow region), the Patching Property is proven by moving the corresponding $Z$ excitations (blue squares) along the red path with a string-like operator $e_X'$ (red dots).
    (f) $X$ excitations created by a $Z$ error $e_Z$ (hollow blue dots) are moved analogously along the blue path with a string-like operator $e_Z'$.
    Then, $e'=e_X' + e_Z'$ is equivalent to $e = e_X + e_Z$.
    (g) In the proof of the Rerouting Property, $Z$ excitations (blue squares) created by an $X$ error (the hollow red dot) are moved along the red path using an $X$ operator (solid red dots), which only intersects $T$ on its portals.
    (h) $X$ excitations are rerouted analogously along the blue path.
    }
    \label{fig:TTIsurfacecode}
\end{figure}

Now, we are ready to state the Patching and Rerouting Properties.
For simplicity, we focus on the toric code case and prove that these properties hold for any syndrome.
We defer the analysis of any 2D TTI code with certain restrictions on $\sigma$ to Sec.~\ref{sec:generalTTI}.

\begin{property}[patching for syndrome $\sigma$]
    \label{lem:patching}
    There exist constants $g$, $\beta$, such that if $e_0$ is an error with syndrome $\sigma$ and $e$ is the restriction of $e_0$ to the interior of a line segment $T$ with length $s$, then there exists an error $e'$ supported on the buffer $B_T$ that is equivalent to $e$ and satisfies the following properties
    \begin{enumerate}
        \item $|e'| \leq gs $,
        \item $|\supp e' \cap T| \leq \beta$,
        \item for any other line segment $T'$ of level lower than or equal to the level of $T$, $\supp e' \cap T' = \emptyset$.
    \end{enumerate}
    
\end{property}

\begin{proof}[Proof (toric code)]
    The syndrome of $e$ consists of excitations whose total charge is neutral and which are located on plaquettes adjacent to $T$. The error $e'$ will be obtained from string operators that annihilate all excitations by moving all $X$ excitations to one place and all $Z$ excitations to one place.
    Without loss of generality, suppose $T$ is a horizontal line segment; a similar procedure works for vertical line segment because in the definition of the code, there is a diagonal symmetry switching $x$ and $y$ coordinates and the two qubits on every site.

    We first consider $Z$ excitations. We move the ones below $T$ horizontally to the plaquette second from the left. This only uses qubits in $B_T$ below $T$. We move the excitations above $T$ that are not the left-most one up by one unit, horizontally to the plaquette second from the left, and then down two units to fuse with the excitations below $T$. The left-most excitation above $T$, if present, is moved one unit right followed by one unit down to fuse with the rest of the excitations. These moves are supported on qubits in $B_T$, and their intersection with $T$ is at most the two qubits on the vertex second from the left; see Fig.~\ref{fig:TTIsurfacecode}(e).

    The $X$ excitations are moved in the opposite way (Fig.~\ref{fig:TTIsurfacecode}(f)). The ones above $T$ are moved horizontally to the plaquette second from the right. The ones below $T$ that are not the right-most one are moved one unit down, then horizontally to the plaquette second from the right, and then two units up. The right-most excitation is moved one unit left and one unit up. All of these moves are supported on $B_T$, and their intersection with $T$ is at most the two qubits on the vertex second to the right.

    Because $e'$ only has support on two vertices in $T$ and the vertices immediately above and below the interior of $T$, the first property is satisfied with $g=4$ and the second with $\beta=4$. The third property holds because $e'$ is supported on $B_T$ as a consequence of Lemma~\ref{lem:bufferproperties}.
\end{proof}

\begin{property}[rerouting for syndrome $\sigma$]
    \label{lem:rerouting}
    There exist constants $g'$, $\beta'$, such that if $e_0$ is an error with syndrome $\sigma$ and $e$ is the restriction of $e_0$ to the interior of a level-$i$ line segment $T$, then there exists an equivalent error $e'$ supported on the buffer $B_T$ satisfying the following properties
    \begin{enumerate}
        \item $\supp e'\cap T\subseteq P_T$,
        \item $|\supp e'\cap T|\le \beta' |e|$,
        \item $|e'|\le \frac{g'L}{2^im}|e|$.
    \end{enumerate}
\end{property}
\begin{proof}[Proof (toric code)]
    Again, without loss of generality, we may assume $T$ is a horizontal line segment.
    We apply the following procedure for all $T_k$ which is the portion of $T$ between two portals to eliminate the excitations caused by the restriction of $e$ to $T_k$.   
    Move all excitations below $T_k$ which are violations of $Z$ stabilizers horizontally to the plaquette left of $T_k$. Move all excitations from $Z$ stabilizers above $T_k$ up one unit, then horizontally to the plaquette left of $T_k$, then down two units to fuse with the excitations below $T_k$. Move all excitations above $T_k$ which are violations of $X$ stabilizers horizontally to the plaquette to the right of $T_k$. Move all excitations from $Z$ stabilizers below $T_k$ down one unit, then horizontally to the plaquette right of $T_k$, then up two units to fuse with the excitations above $T_k$. These moves are shown in Fig.~\ref{fig:TTIsurfacecode}(g)(h).
    The error $e'$ is the sum of the restriction of $e$ to $P_T$ and the moves from all $T_k$.

    All of the moves are supported on $B_T$ and their intersection with $T$ is contained in the portals, so the first property is satisfied. For each $T_k$, we use the portal on the left (right) only if $e$ has at least one $X$ ($Z$) error on a qubit in $T_k$ associated with a horizontal (vertical) edge in the usual representation of the toric code. Therefore, the second property is satisfied with $\beta'=1$.
    From our placement of portals, each $T_k$ has length at most $\frac{L}{2^i(m'-1)}-4$ because $T$ has length $L/2^i$, there are $m'$ equally spaced vertices including the endpoints of $T$, and portals at those locations and on either side. For each $T_k$ that contains an error, $e'$ may have support on the qubits above and below $T'$, the portals to the left and to the right of $T'$, and the vertices above and below those two portals.
    Thus, the number of vertices is $2(\frac{L}{2^i(m'-1)}-3)+6\le \frac{8qL}{2^im}$ since $m=(3m'-2)q\le 4(m'-1)q$, assuming $m'\ge 2$. Thus, the error weight may be increased by a factor of at most $\frac{g'L}{2^im}$ with $g'=8q^2=32$, satisfying the third property.
\end{proof}

For any line $\ell$ and error $e$, we define $t(\ell, e)=|\supp e\cap \ell|$ as the weight of $e$ restricted to $\ell$.
Since any qubit is in at most two lines, the following lemma holds.

\begin{lemma}\label{crossings-cost}
    For any error $e$, we have
    \begin{equation}
        \sum_{\mathrm{line\ }\ell} t(\ell, e) \leq 2 |e|.
    \end{equation}
\end{lemma}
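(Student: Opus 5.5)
The plan is a double-counting argument over pairs (qubit, line). Write
\begin{equation}
\sum_{\mathrm{line\ }\ell} t(\ell,e) \;=\; \sum_{\mathrm{line\ }\ell}\ \sum_{v\in \supp e} \mathbbm{1}[v\in \ell],
\end{equation}
and exchange the order of summation. This gives
\begin{equation}
\sum_{v\in\supp e} \#\{\ell : v\in \ell\}.
\end{equation}
Here $\supp e$ is the set of qubits on which $e$ acts nontrivially, so $|e|=|\supp e|$. The quantity $t(\ell,e)$ counts the qubits of $\supp e$ that sit on vertices of $\ell$.

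The key step is to bound $\#\{\ell: v\in\ell\}$ by two for every qubit $v$. By the setup of Sec.~\ref{sec:overview}, the lines are the vertical lines at $x$-coordinates with $s_0\mid (x-c)$ and the horizontal lines at $y$-coordinates with $s_0\mid(y-d)$. Distinct vertical lines have distinct $x$-coordinates, so they are disjoint, and the same holds for horizontal lines. A qubit at vertex $(x,y)$ therefore lies on at most one vertical line, namely the one at coordinate $x$ if it exists. It likewise lies on at most one horizontal line, so it lies on at most two lines in total.

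Combining the two steps gives $\sum_\ell t(\ell,e)\le \sum_{v\in\supp e} 2 = 2|e|$. The only point needing care is that lines are counted as whole lines, not as line segments. A qubit at a corner belongs to several line segments but still to only two lines, and the statement sums over lines, so no further issue arises. I expect no real obstacle.
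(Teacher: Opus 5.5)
Your proposal is correct and takes the same approach as the paper, which justifies the lemma in one line: every qubit lies on at most two lines, one vertical and one horizontal. Your double-counting exchange of sums just spells that observation out explicitly.
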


We are now ready to prove the Structure Theorem.

\begin{proof}[Proof of Theorem~\ref{thm:structure}]
    We will show that for any fixed shift $(a,b)$, the error $e_0$ can be transformed in two steps into an equivalent $r$-light error $e$. Furthermore, when we randomize over the shift, the resulting error is likely to have weight $|e|\le (1+\varepsilon)|e_0|$. Let $g$, $\beta$, $g'$, $\beta'$ be as in Properties~\ref{lem:patching} and~\ref{lem:rerouting}.
    
    The first step is to modify the error $e_0$ to limit its support on each line segment by repeatedly applying the Patching Property~\ref{lem:patching}. Anytime the error has weight more than $(r-A)/\beta' - A$ on the interior of a line segment $T$, we replace its support on $T$ with an equivalent error supported on the buffer $B_T$ that has weight at most $\beta$ on $T$. This is done in decreasing order of line-segment levels, starting with the shortest, level-$i_0$ line segments and continuing until the longest, level-$1$ line segments have been patched. Let $e_1$ be the resulting error. Since we only add errors to the buffers of each line segment, Lemma~\ref{lem:bufferproperties} implies that the weight of $e_1$ on the interior of any line segment is at most $(r-A)/\beta'$. (We choose $r$ sufficiently large so that $\beta < (r-A)/\beta' - A$.)
    
    The second step is to apply the Rerouting Property~\ref{lem:rerouting} so that the intersections of the error with any line segment $T$ only occurs at the portals. We do this in increasing order of line-segment levels. Because portals of longer lines are also portals of shorter lines, we do not need to reroute on any line segment contained in a longer line segment. Let $e_2$ be the resulting error. The Rerouting Property~\ref{lem:rerouting} and Lemma~\ref{lem:bufferproperties} guarantee that the support of $e_2$ on any line segment is contained in its portals and its weight on the interior of the line segment is at most $r$. This shows that $e_2$ is an $r$-light error.
    
    We now analyze the expected increase in the weight of the error from these two steps, $|e_2|-|e_0|$, as we randomize over the shift $(a,b)$. Each application of the Patching or Rerouting Properties~\ref{lem:patching}~and~\ref{lem:rerouting} on a line segment will incur a \emph{cost}, which we associate with the line containing the line segment.
    For a vertical line $\ell$, let $X_{\ell,j}(b)$ be a random variable denoting the number of level $j$ line segments contained in $\ell$ that are patched, with the analysis similar for horizontal lines. Note that $X_{\ell,j}(b)=0$ if $j$ is smaller than the level of $\ell$. Also, $X_{\ell,j}(b)$ depends only on the vertical shift $b$ since that affects how the intersections of the error with $\ell$ are positioned relative to the line segments of different levels. As a level-$j$ line segment has length $L/2^j$, the cost of such a patch is bounded from above by $\frac{g L}{2^j}$. The total cost to patch $\ell$ is at most $\sum_{j=i}^{i_0} X_{\ell,j}(b) \frac{gL}{2^j}$, which depends on the level of $\ell$.
    For a random horizontal shift, the probability that $\ell$ is level $i$ is $2^{i-1-i_0}$ for $1<i\le i_0$ and $2^{1-i_0}$ for $i=1$, which is at most $2^{i-i_0}=2^i s_0/L$ in either case.
    Therefore, for any vertical shift $b$, the expectation over horizontal shifts $a$ of the cost associated with $\ell$ can be bounded as follows
    \begin{align}
        \E_{a}[\text{cost to patch }\ell] &= \sum_{i=1}^{i_0} \Pr(\operatorname{level}(\ell)=i) \sum_{j=i}^{i_0} X_{\ell,j}(b) \frac{gL}{2^j}
        \leq \sum_{i=1}^{i_0} \frac{2^{i} s_0}{L} \cdot \sum_{j=i}^{i_0}  X_{\ell,j}(b) \frac{gL}{2^j}\\
        &= s_0g\sum_{j=1}^{i_0} \frac{1}{2^j}X_{\ell,j}(b) \sum_{i=1}^{j} 2^i
        \le s_0g\sum_{j=1}^{i_0} \frac{1}{2^j}X_{\ell,j}(b) 2^{j+1}\\
        &= 2s_0g\sum_{j=1}^{i_0} X_{\ell,j}(b)
        \le \frac{2s_0gt(\ell,e_0)}{r'}.
        \label{eq_3}
    \end{align}
    The inequality in Eq.~\eqref{eq_3} uses the fact that each application of the Patching Property~\ref{lem:patching} decreases the support of the error on $\ell$ by at least $r':=(r-A)/\beta'-A-\beta$.
    By linearity of expectation and Lemma~\ref{crossings-cost}, the expected total cost of patching all lines is at most 
    \begin{equation}
        \label{eq:patchingcost}
        \sum_{\mathrm{line\ }\ell} \frac{2s_0gt(\ell,e_0)}{r'} \le \frac{4s_0g|e_0|}{r'}.
    \end{equation}

    Finally, we analyze the cost incurred by rerouting $e_1$. If $\ell$ is a level-$i$ line, an upper bound on the cost associated with $\ell$ can be obtained by considering the worst case in which all intersections of $e$ with $\ell$ occur on different level-$i$ line segments. Each such intersection will then incur a cost of at most $\frac{g'L}{2^i m}$. The expected cost for one given intersection is therefore
    \begin{align}
        \E_{(a,b)}[\text{cost of rerouting intersection}] &\le \sum_{i=1}^{i_0}\Pr(\operatorname{level}(\ell)=i)\frac{g'L}{2^i m}
        \le \sum_{i=1}^{i_0} \frac{2^i s_0}{L}\frac{g'L}{2^i m}\\
        &= \frac{s_0g'i_0}{m}
        = \frac{s_0g'\log_2(L/s_0)}{m}.
    \end{align}
    The total number of intersections of $e_1$ with all lines $\ell$ is at most $2|e_0|$ by Lemma~\ref{crossings-cost} and the fact that patching can only decrease the number of intersections. This means that the total expected rerouting cost is
    \begin{align}
        \label{eq:reroutingcost}
        \E_{(a,b)}[\text{total cost of rerouting}] \leq \frac{2s_0g'\log_2(L/s_0)|e_0|}{m}.
    \end{align}

    By choosing $r=c_2/\varepsilon$ for sufficiently large $c_2$ so that $r'\ge 16s_0g/\varepsilon$, we can make the expected patching cost in Eq.~\eqref{eq:patchingcost} at most $\frac{\varepsilon}{4}|e_0|$. Similarly, by choosing $c_1$ sufficiently large, we can make the expected rerouting cost in Eq.~\eqref{eq:reroutingcost} at most $\frac{\varepsilon}{4}|e_0|$. Therefore,
    \begin{equation}
        \E_{(a,b)}[|e_2| - |e_0|] \le \frac{\varepsilon}{2}|e_0|.
    \end{equation}
    By Markov's inequality, $|e_2|\le (1+\varepsilon)|e_0|$ with probability at least 1/2.
\end{proof}

\section{Completing the proof for 2D TTI codes}
\label{sec:generalTTI}
In this section, we extend our results to all 2D TTI codes, proving Theorem~\ref{thm:PTAS2DTTI}.
Given the results of the previous sections, the remaining part of the proof proceeds in two steps. First, we show that it suffices to consider syndrome configurations where there are no excitations on the plaquettes next to the boundary of any square. This is because we can move any excitation away from the boundary of a square. We show that the cost incurred by such operations is minimal relative to the optimal error weight when $s_0 =\Omega(1/\varepsilon)$ and we randomize over shifts $(c,d)$. Then, we show that for any such syndrome, the Patching and Rerouting Properties~\ref{lem:patching}~and~\ref{lem:rerouting} hold. This implies the Structure Theorem~\ref{thm:structure}, and consequently, the dynamic program in Algorithm~\ref{alg:DP} finds a $(1+\varepsilon)$-approximation of the minimum-weight error by Theorem~\ref{thm:DP}.

\begin{lemma}
    \label{lem:randomshiftmovesyndrome}
    Let $\mc C$ be a 2D TTI code such that any qubit is in support of at most $w$ stabilizer generators. Let $0 < \varepsilon < 1$ and $s_0\ge 192qw/\varepsilon$.
    Then for any syndrome $\sigma$ of $\mc C$, the following is true with probability at least $1/2$ over a random shift $(c,d)$. There is an operator $e'$ with syndrome $\sigma'$ such that $\tilde\sigma := \sigma + \sigma'$ has no excitations on plaquettes adjacent to the boundary of any shifted square, and if $\tilde e_\varepsilon$ is a $(1+\varepsilon/4)$-approximation to the minimum-weight error for syndrome $\tilde\sigma$, then $\tilde e_\varepsilon + e'$ is a $(1+\varepsilon)$-approximation to the minimum-weight error for syndrome $\sigma$.
\end{lemma}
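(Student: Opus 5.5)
Fix a minimum-weight error $e_0$ for $\sigma$ and write $W=|e_0|$. The plan is to define $e'$ explicitly, bound its weight in expectation over $(c,d)$ by a small multiple of $W$, and then conclude by a triangle-inequality argument. First observe that the lines of the shifted grid, at coordinates $x\equiv c$ and $y\equiv d \pmod{s_0}$, do not depend on $(a,b)$. Call a plaquette \emph{near a line} if it is adjacent to a vertex of a line. For a uniformly random $c$, a fixed plaquette is adjacent to a vertical line with probability at most $2/s_0$, and likewise for $d$ and horizontal lines. Hence a fixed plaquette is near some line with probability at most $4/s_0$.

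Next, I would build $e'$ by moving each excitation of $\sigma$ that lies near a line to a plaquette at distance at least two from every line, for instance the nearest plaquette deeper inside the same level-$i_0$ square. This uses a string-like operator of length $O(1)$, so its weight is at most $c_0 q$ for a small absolute constant $c_0$ (roughly $(4+2\cdot 3)q$). Such a target exists because $s_0\ge 5$. By construction, $\tilde\sigma=\sigma+\sigma'$ has no excitations on plaquettes adjacent to any square boundary, since every square boundary lies on lines. To bound the cost, note that each excitation of $\sigma$ lies on a plaquette whose stabilizer anticommutes with some qubit of $e_0$. Since each qubit touches at most $w$ generators, $|\sigma|\le wW$. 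Combining this with the $4/s_0$ bound gives
\[
\E_{(c,d)}|e'|\le c_0 q\cdot\frac{4}{s_0}\, wW .
\]
With $s_0\ge 192qw/\varepsilon$ this is at most $\varepsilon W/16$ once constants are tracked; the constant $192$ is presumably chosen so that this works out. By Markov's inequality, with probability at least $1/2$ we get $|e'|\le \varepsilon W/8$.

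For the conclusion, $e_0+e'$ has syndrome $\tilde\sigma$, so the optimum $\tilde W$ for $\tilde\sigma$ satisfies $\tilde W\le W+|e'|$. Then $\tilde e_\varepsilon+e'$ has syndrome $\sigma$ and weight at most $(1+\varepsilon/4)(W+|e'|)+|e'|\le (1+\varepsilon/4)(1+\varepsilon/8)W+\varepsilon W/8$. Using $\varepsilon<1$, this is at most $(1+\varepsilon)W$.

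The main obstacle I expect is making the move well defined and cheap. Excitations must be moved individually and may fuse, which is harmless since the syndrome is just summed. An excitation near a corner, or near lines on both axes, must be pushed diagonally, but that still costs $O(q)$. The more delicate point is bounding the weight of a string operator that moves an arbitrary charge by a constant distance using the coarse-grained string bound $(4+2s)q$. An excitation there is a general anyon, so I would decompose it into charges that each have a string operator, at a cost that is a constant multiple absorbed into the choice of $s_0$.
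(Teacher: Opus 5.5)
Your route is the same as the paper's: push every excitation on a plaquette adjacent to a line a constant distance away, bound $\E_{(c,d)}|e'|$ by $(4/s_0)\cdot(\text{cost per move})\cdot|\sigma|$ with $|\sigma|\le w|e_0|$, apply Markov, and compare against $e_0+e'$. Your remark that line positions depend only on $(c,d)$, so $\tilde\sigma$ is clean for every $(a,b)$, is implicit in the paper and correct.

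However, the constant-tracking step is false as written, so the lemma with the stated threshold $s_0\ge 192qw/\varepsilon$ does not follow. With your per-excitation cost $c_0q\approx 10q$ you get $\E|e'|\le 40qwW/s_0\le 5\varepsilon W/24$, not $\varepsilon W/16$. The bound $\varepsilon W/16$ would need $c_0\le 3$, which is impossible since even a one-unit move is charged $6q$. Markov then only gives $|e'|\le 5\varepsilon W/12$ with probability $1/2$, and your final expression becomes $1+13\varepsilon/12+5\varepsilon^2/48>1+\varepsilon$. You also cannot absorb extra constants into $s_0$, because its threshold is fixed by the statement.

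The fix is to make the moves as short as possible. An excitation only needs to end on a plaquette not adjacent to any line, so one plaquette per adjacent line direction suffices; moving to ``distance at least two'' overshoots.

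\begin{itemize}
\item \textbf{The paper's accounting.} It charges $C=6q$ (a dual path of length $1$) per adjacent line direction, so a corner excitation pays $2C$. This gives $\E|e'|\le (4/s_0)C|\sigma|\le 24qwW/s_0\le\varepsilon W/8$. Hence $|e'|<\varepsilon W/4$ with probability at least $1/2$, and $(1+\varepsilon/4)^2+\varepsilon/4=1+3\varepsilon/4+\varepsilon^2/16\le 1+\varepsilon$.
\item \textbf{Your cruder accounting.} Charging once per excitation near any line also closes, provided you cap that charge at $8q$ (a diagonal unit move along a dual path of length $2$). Then $\E|e'|\le\varepsilon W/6$, so $|e'|\le\varepsilon W/3$ with probability at least $1/2$, and $(1+\varepsilon/4)(1+\varepsilon/3)+\varepsilon/3=1+11\varepsilon/12+\varepsilon^2/12\le 1+\varepsilon$ for $\varepsilon\le 1$.
\end{itemize}

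Finally, your worry about decomposing a general anyon into charges is unnecessary. The bound $(4+2s)q$ is a support bound: after coarse-graining, the moving operator lives on the qubits of the vertices adjacent to the dual path. Moving several charge components along the same path therefore still has weight at most $(4+2s)q$, and no extra multiplicative constant appears.
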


\begin{proof}
    The error $e'$ consists of operators that move excitations that are adjacent to a line one plaquette away. Each excitation $\sigma_i$ contributes up to $C:=6q$ to the weight of $e'$ if it is adjacent to a vertical line or a horizontal line (or $2C$ if it is adjacent to both). Over random shifts $(c,d)$, the probability that $\sigma_i$ is next to a vertical line is $2/s_0$, as is the probability of it being next to a horizontal line. Therefore, the expected weight of $e'$ is
    \begin{align}
        \E_{(c,d)}[|e'|] &\le \sum_i [\Pr(\sigma_i \text{ adjacent to a horizontal line}) + \Pr(\sigma_i \text{ adjacent to a vertical line})]C\\
        &= \frac{4}{s_0}|\sigma|C
        \le \frac{4wC}{s_0}|e|
        \le \frac{\varepsilon}{8}|e|.
        \label{eq_1}
    \end{align}
    The first and second inequalities in Eq.~\eqref{eq_1} follow from, respectively, the definition of $w$ and the assumption on $s_0$.
    By Markov's inequality, $|e'|<\frac{\varepsilon}{4}|e|$ with probability at least $1/2$.

    Let $e_0$ and $\tilde e_0$ be minimum-weight solutions for $\sigma$ and $\tilde\sigma$, respectively. Then $e_\varepsilon + e'$ is an error with syndrome $\sigma$, and with probability at least $1/2$, its weight is bounded as follows
    \begin{align}
    \label{eq_2}
        |e_\varepsilon + e'| &\le |e_\varepsilon| + |e'|
        \le \left(1 + \frac{\varepsilon}{4}\right)|\tilde e_0| + \frac{\varepsilon}{4} |e_0|
        \le \left(1 + \frac{\varepsilon}{4}\right)|e_0 + e'| + \frac{\varepsilon}{4} |e_0|\\
        &\le \left(1 + \frac{\varepsilon}{4}\right)\left(|e_0| + \frac{\varepsilon}{4} |e_0|\right) + \frac{\varepsilon}{4} |e_0|
        = \left(1 + \frac{3}{4}\varepsilon + \frac{1}{16}\varepsilon^2\right)|e_0|
        \le (1+\varepsilon)|e_0|.
    \end{align}
    The third inequality in Eq.~\eqref{eq_2} holds since $\tilde e_0$  and $e_0+e'$ have syndrome $\tilde\sigma$ and $\tilde e_0$ has minimum weight.
\end{proof}

\begin{lemma}[Patching Property for 2D TTI codes]
    \label{lem:patching_TTI}
    For any 2D TTI code, Property~\ref{lem:patching} holds for syndromes $\sigma$ without any excitations on plaquettes adjacent to the boundary of any shifted square.
\end{lemma}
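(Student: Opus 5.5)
Let $T$ be a line segment of length $s$, which we take to be horizontal without loss of generality, and let $e$ be the restriction of $e_0$ to the interior of $T$. We will build an equivalent $e'$ by annihilating the syndrome of $e$ with string operators that travel inside the buffer $B_T$, as in the toric-code proof. The syndrome of $e$ lives on the plaquettes directly above and below $T$; call these two sets of excitations $\mu_{\uparrow}$ and $\mu_{\downarrow}$. By the hypothesis on $\sigma$, the original error $e_0$ creates no net excitation on plaquettes adjacent to $T$. Consequently each excitation of $e$ on a plaquette adjacent to $T$ is cancelled by excitations that $e_0$ creates from qubits off the interior of $T$, but we will not need this cancellation. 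The only facts we use are that the total charge of $\mu_{\uparrow}\cup\mu_{\downarrow}$ is neutral, since it is the syndrome of the error $e$, and that every excitation has an anyon charge in $\bb Z_2^{2\alpha}$ that can be transported by string operators.

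The construction goes in three steps. First, fuse all excitations in $\mu_{\downarrow}$ into a single anyon $a$ at the plaquette second from the left below $T$, by moving them horizontally along the row of plaquettes just below $T$. After coarse-graining, these strings act only on vertices adjacent to that row, namely the vertices of $T$ and the row one unit below. They therefore overlap the interior of $T$. To avoid this, each string will instead travel one unit further down (the row of plaquettes two below $T$), run horizontally, and come back up, so that its support lies on rows one and two below $T$. Those rows are in $B_T$ because the buffer uses translations by up to two units. This is the reason the general buffer is wider than the toric-code one. Second, do the same above $T$, fusing $\mu_{\uparrow}$ into a single anyon $b$ near the left end. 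Third, neutrality gives $a\times b=1$, so $a$ and $b$ are annihilated by one short string that crosses $T$ at the vertex second from the left. This crossing contributes at most a constant number $\beta$ of qubits on $T$, coming from the finitely many vertices near that crossing.

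The bounds then follow. Each excitation needs a string of length $O(s)$, but summing the strings naively would give $O(s\cdot|\mu|)$ with $|\mu|$ up to $O(s)$. The fix is to sweep instead of routing each excitation separately: move the leftmost excitation one step right, fuse it with whatever sits there, and continue. Each step of the sweep costs $O(q)$, so the total weight is $g s$ with $g=O(q)$. Equivalently, since $\bb Z_2$ charges add and the strings are defined plaquette by plaquette, we can take the mod-2 sum of the string operators, and its support lies in a strip of constant width and length $s$. Support on the interior of $T$ comes only from the single crossing, which gives property 2. Property 3 follows from $\supp e'\subseteq B_T$ together with Lemma~\ref{lem:bufferproperties}, part 1.

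The step I expect to be the main obstacle is making sure that the detoured strings are genuinely supported inside $B_T$ and do not touch vertices of $T$ or the endpoints of $T$. This includes the strings near the corners, and the case where an excitation sits at the very end of $T$. The vertical moves into and out of the detour row must use vertices strictly inside the segment and two units off it. This is where the hypothesis that no excitations lie adjacent to square boundaries should enter. It guarantees that the syndrome of $e$ is not entangled with excitations at corners or on perpendicular line segments. It also lets us choose the vertical connector columns away from the endpoints, so that the string endpoints avoid other line segments of level at most that of $T$.
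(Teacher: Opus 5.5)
There is a genuine gap, and it sits exactly where you say you will not need the hypothesis on $\sigma$. Your detour step assumes that an excitation on a plaquette directly below $T$ can be moved one row down by an operator supported only on rows one and two below $T$. For a general 2D TTI code this is false.

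First, the string operators available to you act on the vertices of every plaquette along the path, including the starting plaquette, and two of its vertices lie on $T$. More fundamentally, after coarse-graining, a generator attached to a plaquette adjacent to $T$ may act only on the vertices of that plaquette that lie on $T$. Then every operator that changes that syndrome bit must touch $T$.

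A concrete example: take the toric code together with one ancilla per vertex, stabilized by a single-qubit $Z$. Attach each such generator to the plaquette whose upper-left corner is that vertex. Let $e$ be $X$ on the ancillas at all interior vertices of a horizontal $T$. Then every $e'$ equivalent to $e$ must act on all $s-1$ of those vertices, so no constant $\beta$ exists. This $\sigma$ is exactly what the hypothesis excludes. So neutrality and transportability of charges alone cannot give Property~\ref{lem:patching}. The hypothesis is needed along the whole bulk of $T$, not just at corners and endpoints as your last paragraph suggests.

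The missing idea is to use the hypothesis to trade $e$ for a thickened restriction of $e_0$ whose syndrome already avoids the plaquettes adjacent to $T$.

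Let $\tilde T$ be $T$ with two vertices removed at each end, and let $\tilde T^+$ be the union of $\tilde T$ with its translates one unit up and down. Every plaquette with all four vertices in $\tilde T^+$ is adjacent to $T$, so $\sigma$ vanishes there. Only qubits of $\tilde T^+$ enter that generator, so $e_0|_{\tilde T^+}$ has no excitation there either. Its neutral syndrome therefore lies in the set $J$ of plaquettes that meet $\tilde T^+$ without being contained in it.

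Sweep these excitations through $J$ to a single point, as in your sweep argument. This gives $e'_{B_T}$ with the following features:
\begin{itemize}
\item it is equivalent to $e_0|_{\tilde T^+}$;
\item it has weight $O(qs)$;
\item it is supported in $B_T$;
\item it meets $T$ only in the four vertices near the ends of $T$ that border $J$.
\end{itemize}

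Finally, set $e'=e'_{B_T}+e_0|_{(\tilde T^+\cup T^\circ)\setminus\tilde T}$, where $T^\circ$ is the interior of $T$. This $e'$ is equivalent to $e=e_0|_{T^\circ}$, because the two restrictions add modulo two to $e_0|_{T^\circ}$. The correction term meets $T$ only at the two interior vertices next to the endpoints. Property~3 then follows from Lemma~\ref{lem:bufferproperties} exactly as you argued.
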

\begin{proof}
    We prove the lemma for a horizontal line segment $T$, with the case of a vertical line segment similar.
    Define $T^\circ$ to be the interior of $T$, $\tilde T$ to be the portion of $T$ excluding two vertices on each end, and $\tilde T^+$ to be the union of $\tilde T$ and its vertical translations one unit up and down; see Fig.~\ref{fig:TTI_patching}(a). Let $e_g$ be the global error and $e_{\tilde T^+}$ and $e_{(\tilde T^+\cup T^\circ)\setminus \tilde T}$ be the restrictions of $e_g$ to the sets indicated by the subscripts. Our approach is to find an error $e'_{B_T}$ supported on $B_T$ that is equivalent to $\tilde e_{\tilde T^+}$. Then $e':=e'_{B_T} + e_{(\tilde T^+\cup T^\circ)\setminus \tilde T}$ will be equivalent to $e = e_{\tilde T^+} + e_{(\tilde T^+\cup T^\circ)\setminus \tilde T}$, and it is supported on $B_T$ because $\tilde T^+\cup T\subseteq B_T$.
    
    Consider the syndrome of $e_{\tilde T^+}$. It can only have excitations on plaquettes containing vertices in $\tilde T^+$ but not the ones with all four vertices in $\tilde T^+$. This is because any excitation on those plaquettes would also be one produced by $e_g$, which does not have any excitations on plaquettes adjacent to $T$.    
    We denote the set of plaquettes with possible excitations $J$.
    The error $e'_{B_T}$ is defined by moving all excitations through $J$ to a single location; see Fig.~\ref{fig:TTI_patching}(b). Thus, $e'_{B_T}$ only has support on the vertices adjacent to plaquettes in $J$, which is a subset of $B_T$ and has size $4s$.
    The support of $e'$ will also be a subset of these vertices. Hence, the first property holds with $g=4q$.
    The only vertices adjacent to a plaquette in $J$ are the two at the endpoints of $\tilde T$ and the two one unit away. This implies the second property with $\beta=4q$.
    The third property is due to Lemma~\ref{lem:bufferproperties}.
\end{proof}

\begin{figure}[htpb]
    \centering
\includegraphics[width=0.95\linewidth,trim={1cm 1cm 1cm 1cm},clip]{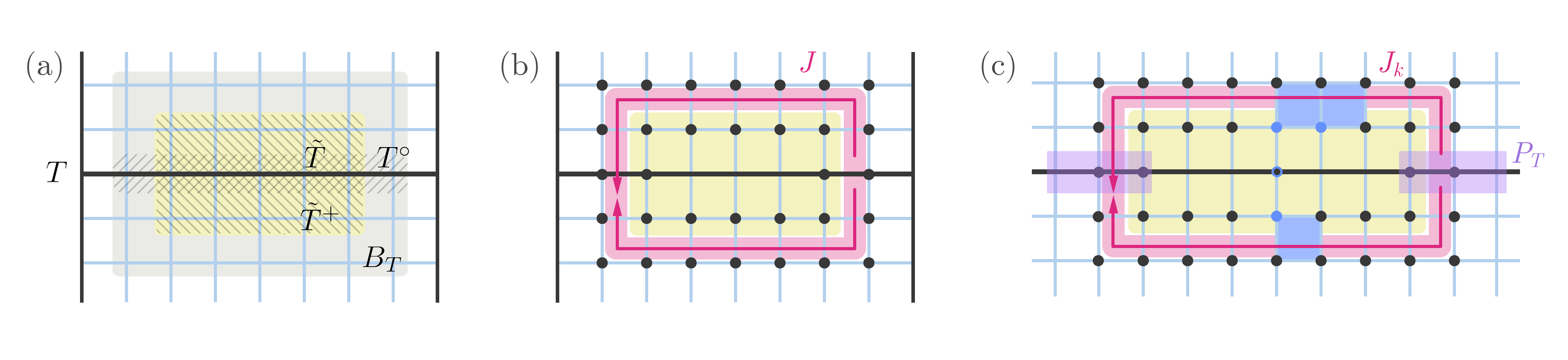}
    \caption{
    (a) The sets of qubits $\tilde T$, $\tilde T^+$, $T^\circ$, $B_T$ used in proving the Patching Property for 2D TTI codes.
    (b) The error $e_{\tilde T^+}$ supported on $\tilde T^+$ (yellow region) causes excitations only within the set of plaquettes $J$ in red (due to the assumption on the global syndrome $\sigma$).
    Moving them along the red path annihilates the excitations and gives an equivalent error $e_{B_T}'$ with constant support on $T$. (c) An error $e_{T_k^+}$ (blue dots) supported on $T_k^+$ (yellow region) creates excitations on the blue plaquettes.
    In general, the syndrome is supported on $J_k$ (red plaquettes).
    Moving them along the red path annihilates the excitations and gives an equivalent error $e_k'$ whose support on $T$ is contained in the portals $P_T$.
    }
    \label{fig:TTI_patching}
\end{figure}

\begin{lemma}[Rerouting Property for 2D TTI codes]
    \label{lem:rerouting_TTI}
    For any 2D TTI code, Property~\ref{lem:rerouting} holds for syndromes $\sigma$ without any excitations on plaquettes adjacent to the boundary of any shifted square.
\end{lemma}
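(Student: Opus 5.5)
We prove the Rerouting Property for a horizontal level-$i$ line segment $T$; the vertical case is symmetric. The argument mirrors the toric-code proof of Property~\ref{lem:rerouting}, with the explicit toric-code moves replaced by the generic anyon-moving argument from the proof of Lemma~\ref{lem:patching_TTI}. First split $T$ into the portals $P_T$ and the maximal stretches $T_1,T_2,\dots$ of non-portal vertices between consecutive portal vertices. The restriction of $e$ to $P_T$ is kept unchanged as part of $e'$. For each $T_k$ that meets the support of $e$, let $T_k^+$ be $T_k$ together with its translates one unit up and one unit down, and let $e_{T_k^+}$ be the restriction of the global error $e_g$ to $T_k^+$. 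The portal placement is what makes this work. Each equally spaced portal vertex comes with its two neighbours (and the corner region with the vertices at distance two). Hence the endpoints of each $T_k$ are flanked by at least one portal vertex on each side, so every $T_k^+$ lies strictly inside $B_T$, at horizontal distance at least one from the ends of $T$.

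The key step is to locate the syndrome of $e_{T_k^+}$. It can only be nonzero on plaquettes that touch $T_k^+$. Consider a plaquette all four of whose vertices lie in the region where $e_g$ agrees with $e_{T_k^+}$, namely the plaquettes above and below $T$ in the middle of the stretch. Its stabilizer value coincides with that of $e_g$. Since $\sigma$ has no excitations on plaquettes adjacent to $T$, that value is trivial there. Consequently the syndrome of $e_{T_k^+}$ is confined to a set $J_k$ of plaquettes. This set consists of a constant number of plaquettes near the two ends of $T_k$, namely those touching the portal vertices and their neighbours, together with the plaquettes in the rows at vertical distance $\pm 1$ to $\pm 2$ from $T$ that border $T_k^+$.

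Because $e_{T_k^+}$ is a genuine error, the total charge on $J_k$ is neutral. We therefore annihilate it with a string operator $e'_k$ that moves all anyons in $J_k$ along a path through $J_k$ to a single plaquette, as in Fig.~\ref{fig:TTI_patching}(c). The path runs along the row just above $T_k^+$ or just below it, and crosses $T$ only at the flanking portal vertices at the ends of $T_k$. The recipe follows the toric-code proof: go around the row at distance one and come back through the portal. By the coarse-graining convention, $e'_k$ is supported only on vertices adjacent to the path, so $e'_k\subseteq B_T$ and $\supp e'_k\cap T\subseteq P_T$. We then set
\[
e' := \res{e}{P_T} + \sum_k \Bigl(e'_k + \res{e_g}{T_k^+\setminus T_k}\Bigr).
\]
Here the term $\res{e_g}{T_k^+\setminus T_k}$ is added back so that $e'$ is equivalent to $e$, exactly as in the proof of Lemma~\ref{lem:patching_TTI}. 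Strictly, to keep $e'$ equivalent to $e$ alone, the rows above and below are handled by subtracting their restriction as well, which is again supported in $B_T$ off $T$. This gives property 1.

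For property 2, each nonempty $T_k$ contributes at most a constant number of portal qubits, roughly $4q$ (two portal vertices and their neighbours). The number of nonempty $T_k$ is at most $|e|$, so $\beta'=O(q)$ suffices. For property 3, the length of each $T_k$ is at most $L/(2^i(m'-1))$. The weight of $e'_k$ plus the added rows is therefore $O(q\,L/(2^i m'))$, and since $m\le 4(m'-1)q$ this is $\le g'L/(2^i m)$ per nonempty stretch. Multiplying by at most $|e|$ stretches gives $g'=O(q^2)$. The main obstacle I expect is the bookkeeping around the ends of $T_k$ and near corners of $T$. There one must check that the excitations created at the boundary of $T_k^+$ can indeed be routed through $J_k$ while touching $T$ only in portals, and without leaving $B_T$ near the endpoints of $T$. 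This is why the portals include the neighbours of each portal vertex and the vertices at distance two from the corners. If that routing fails near a corner, I would fall back to crossing through the corner portals, whose support is unrestricted.
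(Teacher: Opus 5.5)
There is a genuine gap in property 1, at the ends of each stretch, which is exactly where you expected trouble. It comes from your choice of $T_k$. You take $T_k$ to be only the non-portal run, say columns $v+2,\dots,w-2$ with bounding portals $v+1$ and $w-1$. Then the plaquettes of $J_k$ adjacent to $T$ at the left end lie between columns $v+1$ and $v+2$, so the non-portal vertex $v+2$ of $T$ is one of their corners. An anyon of $e_{T_k^+}$ sitting on such a plaquette must be removed by an operator that anticommutes with its stabilizer. For a general coarse-grained 2D TTI code, the only control you have is that a string operator is supported on the vertices of the plaquettes it visits, starting plaquette included. So nothing prevents $e'_k$ from acting on $v+2$, and $\supp e'\cap T\subseteq P_T$ is not established.

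Your path description is also inconsistent with your $J_k$: a path meeting $T$ only at portal vertices must pass through the plaquettes between columns $v$ and $v+1$, which are not in $J_k$. For the toric code you could hand-pick moves avoiding $v+2$, as in the toric-code proof, but that is not available in general. The corner-portal fallback does not help. The problem occurs at both ends of every stretch, not just near corners, and translates of corners are not in $B_T$ anyway.

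The fix, which is what the paper does, is to let $T_k$ include its two bounding portal vertices, $T_k=\{v+1,\dots,w-1\}$. Take $J_k$ to be the plaquettes with a nonzero proper subset of their vertices in $T_k^+$. Your syndrome-localization argument is unchanged, since plaquettes with all four vertices in $T_k^+$ are still adjacent to $T$. But now the ring $J_k$ sits one column further out, and its only vertices on $T$ are $v,v+1,w-1,w$. All of these are portals because each equally spaced portal comes with its neighbours, which also gives $\beta'=4q$ directly.

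At the ends of $T$, the first stretch then starts at the portal at distance two from the corner. So $J_k$ reaches only the column at distance one, whose translates lie in $B_T$; this is the real role of the distance-two portals, not the containment of $T_k^+$. With this change your decomposition and weight counts go through as in the paper, with one adjustment. The first term must become $\res{e}{T^\circ\setminus\bigcup_{k\in K}T_k}$, with $K$ indexing the stretches that meet $\supp e$, so that bounding portals are not counted twice. Your extra remark about ``subtracting'' the off-$T$ rows is unnecessary: everything is mod~$2$, and your formula is already equivalent to $e$.
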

\begin{proof}
    We again prove the lemma for a horizontal line segment without loss of generality. Let $T^\circ$ be the interior of $T$, $\{T_k\}_k$ be the portions of $T$ between two portals, inclusive of the two portals, and $T_k^+$ be the union of $T_k$ and its vertical translations up and down; see Fig.~\ref{fig:TTI_patching}(c). Let $e_g$ be the global error and $e_{T_k^+}$ denote the restriction of $e_g$ to $T_k^+$. For each $T_k$ that intersects the support of $e$, we find an error $e_k'$ supported on $B_T$ that is equivalent to $e_{T_k^+}$. Let $K$ be the set of all such $k$. Then $e' = \sum_{k\in K} e_k' + \tilde e$ will be equivalent to $e = \sum_{k\in K} e_{T_k^+} + \tilde e$, where $\tilde e$ is the restriction of $e_g$ to the set $(\bigcup_{k\in K} T_k^+\cup T^\circ)\setminus \bigcup_{k\in K} T_k$. Also, $e'$ is supported on $B_T$ because each summand is supported on $B_T$.

    We annihilate the excitations caused by $e_{T_k^+}$ similarly as in the proof of Lemma~\ref{lem:patching_TTI} and as illustrated in Fig.~\ref{fig:TTI_patching}(c). The set $J_k$ consisting of the plaquettes with a nonzero proper subset of its vertices in $T_k^+$ contains the possible excitations. The error $e_k'$ is defined by moving all excitations through $J$ to a single location, which has support contained in $B_T$. For $T_k$ on the ends of the $T$, this uses the fact that we placed an extra portal distance two away from each endpoint of $T$. Note that the support of $e_k'$ on $T$ is contained in the portals. Because $\tilde e$ does not have support on $T\setminus P_T$, the first property is satisfied. For each $k\in K$, the error $e$ has support on $T_k$, and we add support of $e'$ on up to four vertices in $T$. This shows the second property with $\beta'=4q$. From our placement of portals, each $T_k$ has length $s_k \le \frac{L}{2^i(m'-1)} - 2$. For each $k\in K$, the equivalent error $e_k'$ may contribute support on up to
    \begin{equation}
        4(s_k+3)+4 \le 6(s_k+2) \le \frac{6L}{2^i(m'-1)} \le \frac{24qL}{2^im}
    \end{equation}
    vertices by the geometry of $J_k$. The first inequality uses the fact that $s_k\ge 2$ and the last inequality assumes that $m'\ge 2$ so that $m = (3m'-2)q\le 4(m'-1)q$. Thus, the error weight may be increased by a factor of at most $\frac{g'L}{2^i m}$ with $g'=24q^2$, implying the third property.
\end{proof}

We are now ready to prove the main theorem.

\begin{proof}[Proof of Theorem~\ref{thm:PTAS2DTTI}]
    Let $s_0\ge 192qw/\varepsilon$ be a power of $2$, $m=4c_1(\log L)/\varepsilon$, and $r=4c_2/\varepsilon$ with $c_1$ and $c_2$ as in Theorem~\ref{thm:structure}. In our randomized algorithm, the integers $a,b,c,d$ are picked uniformly randomly in the ranges $0\le a,b < L/s_0$ and $0\le c,d < s_0$. We compute $e'$ and $\tilde\sigma$ from Lemma~\ref{lem:randomshiftmovesyndrome}, which can be done with time complexity $O(L^2)$ because $e'$ consists of operators that move excitations one unit away from lines and $\tilde\sigma$ is the resulting syndrome. We run Algorithm~\ref{alg:DP}, which has time complexity $L^2(\log L)^{O(1/\varepsilon)}$ by Theorem~\ref{thm:DP}, to obtain $\tilde e_\varepsilon$. We then output $\tilde e_\varepsilon + e'$. This algorithm has the required time complexity.

    There is at least a $1/2$ probability over the choice of $(c,d)$ that the conclusion of Lemma~\ref{lem:randomshiftmovesyndrome} is true. In particular, $\tilde\sigma$ has no excitations on plaquettes adjacent to the boundary of any shifted square, so Lemmas~\ref{lem:patching_TTI} and~\ref{lem:rerouting_TTI} imply the Patching and Rerouting Properties~\ref{lem:patching}~and~\ref{lem:rerouting}. By the Structure Theorem~\ref{thm:structure} (replacing $\varepsilon$ with $\varepsilon/4$), there is at least a $1/2$ probability over the choice of $(a,b)$ that an $r$-light error with approximation ratio $(1+\varepsilon/4)$ exists for the syndrome $\tilde\sigma$. Algorithm~\ref{alg:DP} finds the optimal $r$-light error by Theorem~\ref{thm:DP}, so $\tilde e_\varepsilon$ is also a $(1+\varepsilon/4)$-approximation for the syndrome $\tilde\sigma$. Lemma~\ref{lem:randomshiftmovesyndrome} therefore implies that $\tilde e_\varepsilon + e'$ is a $(1+\varepsilon)$-approximation for the original syndrome $\sigma$. The combined success probability is at least $1/4$.

    To derandomize the algorithm, we can simply iterate over all $a$, $b$, $c$, $d$ and pick the output of minimum weight. This incurs an extra $L^2$ factor in time complexity since there are $L^2$ choices of $(a,b,c,d)$. The space complexity is increased by $O(L^2)$ to store the best solution so far, which is negligible compared to the trivial bound of $L^2(\log L)^{O(1/\varepsilon)}$ for each iteration.
\end{proof}

\section{Further extensions}

We consider several extensions of our results. In Sec.~\ref{subsec:removeassumptions}, we discuss how to remove some of the simplifying assumptions in our 2D models, such as the lattice size and the boundary conditions. In Sec.~\ref{subsec:higherdimension}, we explain generalizations of our techniques to higher-dimensional topological codes and quantum memories, including the toric code with phenomenological noise.

\subsection{Removing the simplifying assumptions on 2D TTI}
\label{subsec:removeassumptions}

Let us comment on the setting in which $L$ is not a power of two. We define the ``squares'' to be slightly rectangular by rounding lines to the nearest integer coordinate. That is, for a shift $(c,d)$, we can define the level-$i$ ``squares'' to be partitioned by the lines $x=c+\lfloor kL/2^i\rfloor$ and $y=d+\lfloor kL/2^i\rfloor$ for $0\le k < 2^i$.
A shift by $(a,b)$ is a ``translation'' by approximately $aL/2^{i_0}$ units right and $bL/2^{i_0}$ units up so that the level-$i_0$ squares are bijectively mapped to each other.
Although the ``squares'' are deformed in this mapping due to their nonuniform dimensions, this does not affect the analysis.
Portals can also be defined by rounding the coordinates appropriately so that the portals of the larger ``squares'' are also portals of the smaller ``squares''.
With these modifications, the proof is essentially unchanged---the dynamic program still finds the optimal $r$-light error, which is a $(1+\varepsilon)$-approximation for the minimum-weight decoding problem. These modifications also work if the lattice itself is not square, e.g., having size $L_1\times L_2$ for comparable $L_1$ and $L_2$.

Our results also hold for topological codes with open boundary conditions. 
Here, we still shift the squares in a periodic way, which means that some squares may be ``cut off'' by the boundary and consist of two (or four) components placed along opposite sides of the lattice.
Since we never used any assumption about squares being connected, all the proofs are unchanged. For example, the dynamic program still searches for the solutions of the decoding subproblems, which may consist of errors on the boundary of the lattice if a square is cut off. In the Patching and Rerouting Properties, if a line segment is cut off by the boundary of the lattice, the total charge caused by the error in each connected component is neutral, so the excitations within each connected component can be annihilated independently.

We also remark that the direct translation of Arora's PTAS for Euclidean problems~\cite{AroraTSP} would dissect squares only until they contain a constant number of excitations, resulting in a data structure called a quadtree. However, in our setting, it is not obvious how to efficiently solve the decoding subproblem directly on a large square of non-constant size, even when there is a constant number of excitations inside.
This is why we dissect squares until the highest-level ones are of constant size, where exhaustive search is efficient. If the base cases of the dynamic program with larger squares were efficiently solvable, we could potentially improve the time complexity of our algorithm from $L^2(\log L)^{O(1/\varepsilon)}$ to $|\sigma|(\log L)^{O(1/\varepsilon)}$.

\subsection{Higher-dimensional topological codes and quantum memories}
\label{subsec:higherdimension}

\begin{figure}[htpb]
    \centering
    \includegraphics[width=0.8\linewidth]{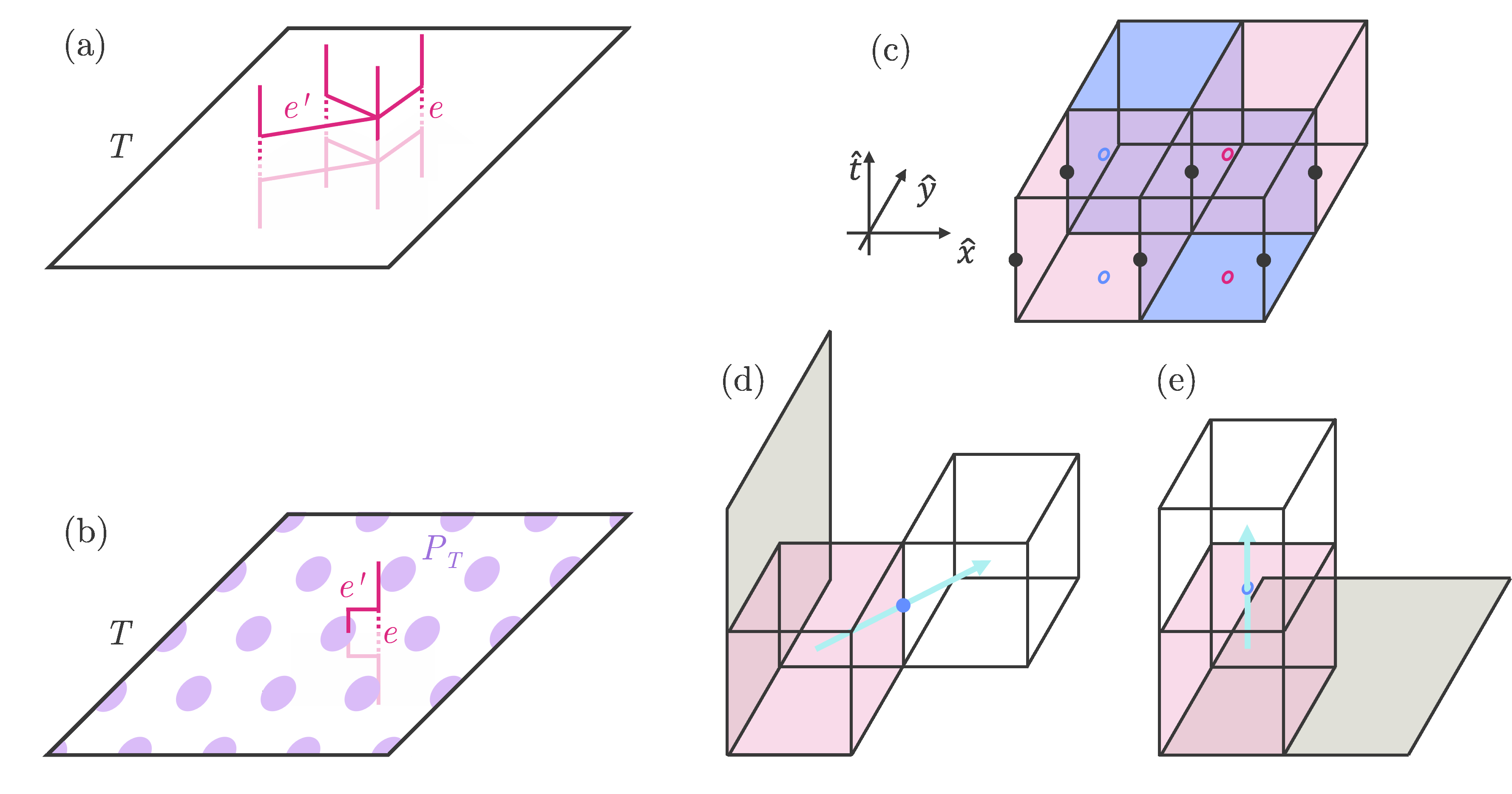}
    \caption{(a) The Patching Property for higher-dimensional decoding problems. An error $e$ (red vertical lines) is replaced by an equivalent one $e'$ (solid red lines) whose support on the $(D-1)$-dimensional face $T$ has constant size. The paths taken by $e'$ on either side of $T$ are spanning trees.
    (b) In the Rerouting Property, an error $e$ (red vertical line) that passes through a $(D-1)$-dimensional face $T$ is replaced by an equivalent error (solid red line) whose support on $T$ is contained in the portals $P_T$. 
    (c) $X$ and $Z$ detectors in the quantum memory setting with the $(2+1)$D toric code form a checkerboard pattern in space and are repeated in the time direction $\hat t$.
    We depict them as red and blue cubes, respectively.
    Each detector comprises two measurement errors on horizontal faces (hollow dots) and four qubits on vertical edges (solid dots).
    We can move an $X$ excitation (shaded cube) away from either (d) a vertical plane using a $Z$ qubit error or (e) a horizontal plane using an $X$ measurement error.
    }
    \label{fig:higherdimension}
\end{figure}

Our approach to the decoding problem leverages the presence of point-like excitations connected by string-like errors.
This structure is not specific to two dimensions---certain higher dimensional topological codes and quantum memories exhibit it as well.
Concretely, the following three decoding scenarios have precisely this structure: 
(i) the 3D subsystem toric code~\cite{Kubica2022, Bridgeman2024} and the 3D gauge color code~\cite{Bombn2015gauge, Kubica2015} with Pauli $X$, $Y$ and $Z$ errors, 
(ii) a variant of the stabilizer color code
in $D\geq 3$ that has point-like excitations for Pauli $Z$ errors\footnote{
For the stabilizer toric code in $D\geq 3$, there is likewise a variant with point-like excitations for Pauli $Z$ errors.
Analogously as for the 2D toric code, the minimum-weight perfect matching decoder finds a minimum-weight recovery in this case, so minimum-weight decoding for the higher-dimensional toric code with Pauli $Z$ errors is efficiently solvable.} 
(for Pauli $X$ errors, by contrast, the corresponding excitations are no longer point-like and instead correspond to higher-dimensional objects)~\cite{Bombin2007,Kubica_thesis},
and (iii) quantum memory with the $(2+1)$D toric code and phenomenological noise comprising Pauli $X$, $Y$ and $Z$ errors together with measurement errors~\cite{Dennis2002}
(the scenario with circuit-level noise is analogous, albeit with a more complex decoding graph).

In all three cases, point-like excitations can branch and, consequently, following similar arguments as in Ref.~\cite{gu2026colorcodesurfacecode}, one can establish NP-hardness of the minimum-weight decoding problem via a reduction from the three-dimensional matching problem, which is NP-complete~\cite{Karp1972,GareyJohnson90NPcompleteness}; nevertheless, because of the decoding structure, our PTAS extends to them.
In the remainder of this subsection, we sketch the high-level idea of how to modify our PTAS for a $D$-dimensional decoding problem, similarly to how Arora's PTAS for Euclidean problems is generalized to higher dimensions~\cite{AroraTSP}.

In what follows, we assume that $D$ is a constant and thus ignore prefactors that are only dependent on $D$.
The lattice will be recursively dissected into $D$-dimensional hypercubes instead of squares, and shifted dissections are defined in a similar way with $D$ shift directions instead of two.
We place portals on the $(D-1)$-dimensional faces of these hypercubes, 
which consist of a regularly spaced subgrid of vertices on the face as well as vertices within some constant distance of the subgrid; see Fig.~\ref{fig:higherdimension}(b).
The number of portals on each $(D-1)$-dimensional face is chosen to scale as
$m = O\left((\frac{1}{\varepsilon}\log L)^{D-1}\right)$, and we are interested in the $r$-light solutions that use $r=O\left((\frac{1}{\varepsilon})^{D-1}\right)$ of them.

In the dynamic program, we find the optimal $r$-light solution on a given shifted dissection. There are $O(L^D)$ hypercubes $S$ and $O(m^{2Dr})$ decoding subproblems on each hypercube from choosing an error $f$ supported on $\partial S$ using at most $r$ portals on each of its $2D$ faces of dimension $(D-1)$.
The decoding subproblems are solved in the same way as before.
The base cases on the highest-level hypercubes are found through exhaustive search.
Each decoding subproblem on a lower-level hypercube $S$ with error $f$ is solved by iterating over the $r$-light lifts $f^+$ supported on the boundaries of all $2^D$ of its subhypercubes, piecing together the solutions for the subhypercubes, and picking a minimum-weight error configuration.
There are $D2^{D-1}$ faces of these subhypercubes that are ``internal'' with respect to the hypercube, so there are $O(m^{D2^{D-1}r})$ many $r$-light lifts $f^+$. For each $f^+$, a constant number $2^D$ of solutions need to be looked up.
Therefore, the total time complexity is $O(L^Dm^{2Dr}m^{D2^{D-1}r})=L^D(\log L)^{O\left((1/\varepsilon)^{D-1}\right)}$ (with trivially the same bound for the space complexity).

To show correctness of the algorithm, we need to prove the Structure Theorem stating that with probability at least $1/2$ over random shifts, there is an $r$-light lift with weight within a multiplicative factor of $(1+\varepsilon)$ of the minimum for the syndrome $\sigma$.
By applying (modified versions of) the Patching and Rerouting Properties, the minimum-weight error is transformed to an $r$-light one.
These properties replace an error supported on the face $T$ of a hypercube with an equivalent one supported on its buffer $B_T$ consisting of all translations of $T$ in the orthogonal direction within some constant distance.
Note that the transformations are done only for the purposes of the proof, as the dynamic program will find the optimal $r$-light solution.

In $D$ dimensions, the Patching Property states that for any weight-$k$ error $e$ supported on the interior of a $(D-1)$-dimensional face $T$ with side length $s$, there is an equivalent error $e'$ of weight $O(k^{1-(1/(D-1))}s)$ supported on $B_T$ that intersects $T$ at most a constant number of times.
Note that the bound on $e'$ now depends on $k$ for $D>2$.
This property can be proven in a similar way as in the 2D case by moving all excitations caused by $e$ away from $T$ and routing them to the same location through a path that crosses $T$ at a single location. This annihilates all the excitations because the total charge of all excitations caused by $e$ is neutral.
A well-known result from graph theory states that a minimum-weight spanning tree in $D-1$ dimensions has total length $O(k^{1-(1/(D-1))}s)$~\cite{SteeleSnyder1989};
one can choose to move the excitations on either side of $T$ along such a spanning tree.
An example of applying the Patching Property is shown in Fig.~\ref{fig:higherdimension}(a).
We may also use the Patching Property for lower-dimensional faces of a hypercube to clean the error from those boundaries.

For the Rerouting Property, we deform an error $e$ supported on a $(D-1)$-dimensional face $T$ of a level-$i$ hypercube so that it only passes through the portals in $T$, as shown in Fig.~\ref{fig:higherdimension}(b). This is done by moving the excitations away from $T$ and rerouting them through the closest portal. Because the portals form a subgrid of $T$ with spacing $\Theta(m^{-1/(D-1)}L/2^i)$, the equivalent error $e'$ will have weight at most $|e'| = O(m^{-1/(D-1)}L/2^i)|e|$, and its support on $P_T$ will only increase by a constant factor.

The Structure Theorem can then be proven in a similar way as in the 2D case.
By randomizing over the shifts, the probability that a given hyperplane has level $i$ is at most $2^i/L$, and we can show that $\E[|e'|-|e|]\le \frac{\varepsilon}{2}|e|$. This means that $|e'|\le (1+\varepsilon)|e|$ with probability at least $1/2$ by Markov's inequality. We skip over the details of these calculations because it uses the same ideas as in the proof of Theorem~\ref{thm:structure} and is essentially unchanged from the presentation in Ref.~\cite{AroraTSP} for Euclidean problems.

However, there is one important subtlety in proving the Patching and Rerouting Properties in $D>2$ dimensions.
For a general topological code, moving an excitation on a hypercube away from its faces may require acting on qubits on these faces.
For example, if the lattice is sufficiently coarse-grained, a stabilizer on a hypercube may only act on qubits at a single vertex. This is why for general 2D TTI codes, we prove the property for syndromes without excitations on plaquettes next to any line. In the proof of the Patching Property, we work with an error $\tilde e_{\tilde T^+}$ that is the restriction of the global error to a region around $T$, and its excitations on plaquettes adjacent to $T$ can only be at the ends of $\tilde T^+$. This means that we only have to move a constant number of excitations away from plaquettes adjacent to $T$, and $e'$ will have constant intersection with $T$. In higher dimensions, the region to $\tilde T^+$ would be at least two-dimensional and have a boundary of growing size, which means that we might not be able to guarantee constant support of $e'$ on $T$. A similar issue arises for the Rerouting Property. We expect this caveat to be an artifact of this particular proof technique and that an $r$-light error should approximate the minimum-weight solution in general.

For the three decoding scenarios mentioned at the beginning of this subsection, it is possible to move excitations away from a plane without using qubits on the plane.
For example, in the quantum memory setting with the (2+1)D toric code and phenomenological noise, we may consider the rotated representation with $X$ and $Z$ detectors forming cubes as shown in Fig.~\ref{fig:higherdimension}(c).
Recall that in this setting, a detector is obtained by combining measurement outcomes of the same stabilizer generator from two consecutive time steps; we consider flipped detectors as locations of point-like excitations.
Similarly as in the two-dimensional case, point-like excitations are connected by string-like errors comprising Pauli $X$, $Y$ and $Z$ operators, as well as measurement errors.
Here, an excitation on a cube may be caused by either a measurement error associated with one of the two faces above and below the cube or a qubit error associated with one of the four vertical edges of the cube.
Therefore, the excitation can be moved away 
from a vertical plane using a qubit error on either of the two vertical edges on the other side of the plane (Fig.~\ref{fig:higherdimension}(d)); it can be moved away 
from a horizontal plane using a measurement error on the opposite face of the plane (Fig.~\ref{fig:higherdimension}(e)).
Once the excitation is away from a plane, it can then be moved freely using string-like operators to satisfy the Patching and Rerouting Properties.
Our analysis carries over in the three decoding scenarios, showing that a PTAS exists for these minimum-weight decoding problems.

\section{Discussion}

In our work, we established a PTAS for minimum-weight decoding of topological codes in two dimensions and beyond---to our knowledge, the first PTAS for this problem.
The key structure we exploited is that of point-like excitations connected by string-like operators.
Our results complement previous works establishing the NP-hardness of the exact problem~\cite{Hsieh2011,Kuo2020,walters2026CCNPhard,gu2026colorcodesurfacecode}, i.e., while finding a minimum-weight recovery is intractable, for any $\varepsilon > 0$ we can find in polynomial time a recovery operator whose weight is within a multiplicative factor of $1 + \varepsilon$ of the minimum weight.

As an immediate consequence, our PTAS yields an explicit, computationally efficient decoder for the 2D color code that corrects any error of weight up to $(1-\varepsilon)d/2$, where $d$ is the code distance, approaching the optimal half-distance bound.
This improves on all previous color-code decoders, which either guarantee correcting only a smaller fraction of the distance~\cite{Delfosse2014,Kubica2023CCrestrictiondecoder,Chamberland2020,Sahay2022} or come with no provable guarantees~\cite{Lee2025}.

We also expect our approach to be of practical value.
Many of the best-performing decoders seek low-weight recovery operators consistent with the observed syndrome; the prototypical example is the minimum-weight perfect matching decoder for the toric code in the memory setting with the circuit-level noise~\cite{Dennis2002,fowler2012}.
These decoders, however, do not necessarily find a minimum-weight recovery, which can be an intractable task.
Since our PTAS serves as a computationally efficient decoder that approximates minimum-weight error, we expect it to be competitive with existing decoders; testing it numerically would be an exciting next step.
On the theoretical side, it would be interesting to prove that approximate minimum-weight decoders have a non-zero QEC threshold and to understand how that threshold may depend on $\varepsilon$.

\begin{acknowledgments}
S.G. and A.K. acknowledge support from the NSF (QLCI, Award No. OMA-2120757), IARPA and the Army Research Office (ELQ Program, Cooperative Agreement No. W911NF-23-2-0219).
L.W. is supported by NSERC Fellowship PGSD3-587672-2024 and thanks the Yale Quantum Institute for hosting
her during the project.

\emph{Note added.---}We would like to bring the reader's attention to independent and concurrent work by Mark Walters~\cite{Walters26}, which provides a PTAS for minimum-weight decoding in the 2D color code.

\end{acknowledgments}

\bibliography{ref}
\end{document}